\newtheorem{claim}{Claim}
\def\desd#1{\textsc{d#1}}
\begin{document}
\title{Separation of Circulating 
Tokens\thanks{Research supported by NSF Grant 0519907.}} 
\author{Kajari Ghosh Dastidar and Ted Herman \\
Department of Computer Science, University of Iowa}
\date{August 2009 (revised October 2009)}
\maketitle

\begin{abstract}
\noindent
Self-stabilizing distributed control is often modeled by 
token abstractions.  A system with a single token may 
implement mutual exclusion;  a system with multiple tokens
may ensure that immediate neighbors do not simultaneously
enjoy a privilege.  For a cyber-physical system, tokens
may represent physical objects whose movement is controlled.   
The problem studied in this paper is to 
ensure that a synchronous system with $m$ circulating 
tokens has at least $d$ distance between tokens.  
This problem is first considered in a ring where $d$ is given
whilst $m$ and the ring size $n$ are unknown.  The protocol 
solving this problem can be uniform, with all processes running
the same program, or it can be non-uniform, with some processes
acting only as token relays.  The protocol for this first 
problem is simple, and can be expressed with Petri net formalism.  
A second problem is to maximize $d$ when $m$ is given, and $n$ is
unknown.  For the second problem, the paper presents a 
non-uniform protocol with a single corrective process.  
\end{abstract} 
{\footnotesize
Keywords:  Self-stabilization, Petri Nets, Token Rings, Sensor Networks. \\
General Terms:  Algorithms, Reliability \\
Subjects: 	Distributed, Parallel, and Cluster Computing (cs.DC) \\
ACM classes: 	C.2.4 \emph{Distributed applications}; D.1.3 \emph{Distributed programming}; 
                D.2.2 \emph{Petri nets}; C.3 \emph{Real-time and embedded systems} \\
Report number:	TR09-02 Department of Computer Science, University of Iowa \\
Extended Abstract: \cite{HG09} 
}

\section{Introduction}
\label{section:introduction}

Distributed computing deals with the interaction of 
concurrent entities.  Asynchronous models permit irregular 
rates of computation whereas pure synchronous models can impose 
uniform steps across the system.  For either mode of concurrency
the application goals may benefit from controlled reduction of 
some activity.  Mutual exclusion aims to reduce the activity to one
process at any time;  some scheduling tasks require that 
certain related processes not be active at the same time.   
System activation of a controlled functionality 
is typically abstracted as a process \emph{having a token}, 
which constitutes permission to engage in some controlled action. 
Many mechanisms for regulating token
creation, destruction, and transfer have been published.  
This paper explores a mechanism based on timing information in 
a synchronous model.  In a nutshell, each process has one or 
more timers used to control how long a token rests or moves to
another process.  An emergent property of a protocol using this
mechanism should be that tokens move at each step, tokens visit
all processes, and no two tokens come closer than some given 
distance (or, alternatively, that tokens remain as far apart 
as possible).  The challenge, as with all self-stabilizing 
algorithms, is that tokens can initially be located arbitrarily
and the variables encoding timers or other variables may have 
unpredictable initial values.   

One motivating application is physical process control,
as formalized by Petri nets.  The tokens of a Petri net can 
represent physical objects.  As an example, one can imagine
a closed network where some objects are conveyed from place
to place, with some physical processing (loading, unloading,
modifications to parts) done at each place.  For the health
of the machinery it may be useful to keep the objects at some
distance apart, so that facilities at the different places have 
time to recharge resources between object visits.  Figure 
\ref{figure:petri} partially illustrates such a situation, 
with an unhealthy initial state (three objects are together
at one place).  The circuit of the moving objects is a ring
for this example.  The formalism of Petri nets allows us to 
add additional places, tokens and transitions so that a 
self-stabilizing network can be constructed:  eventually, the 
objects of interest will be kept apart by some desired 
distance.  Section \ref{section:known-sep} presents a self-stabilizing
algorithm for this network.  
 
\begin{figure}[ht]
\quad
\begin{minipage}{0.45\columnwidth}
{\footnotesize
The figure shows a large ring and two smaller rings, where each 
smaller ring is connected by a joint transition (which can only
fire when a token is present on each of its inputs) to the larger
ring.  On the right side a portion of the larger (clockwise) token ring
is represented, with three tokens shown resting together at the same
place.  Two other smaller, counterclockwise rings are partially 
shown on the left side, each with one token.  The joint transition 
will prevent the three resting tokens from firing until the 
token on the smaller ring completes its traversal.  Thus the 
smaller rings, each having exactly
one token in any state, behave as delay mechanisms.  The algorithm
given in Section \ref{section:known-sep} uses conventional process 
notation instead of a Petri net, and the smaller rings are replaced
by counters in a program.
}
\end{minipage}
\quad 
\framebox{
\begin{minipage}{0.45\columnwidth}
\epsfig{file=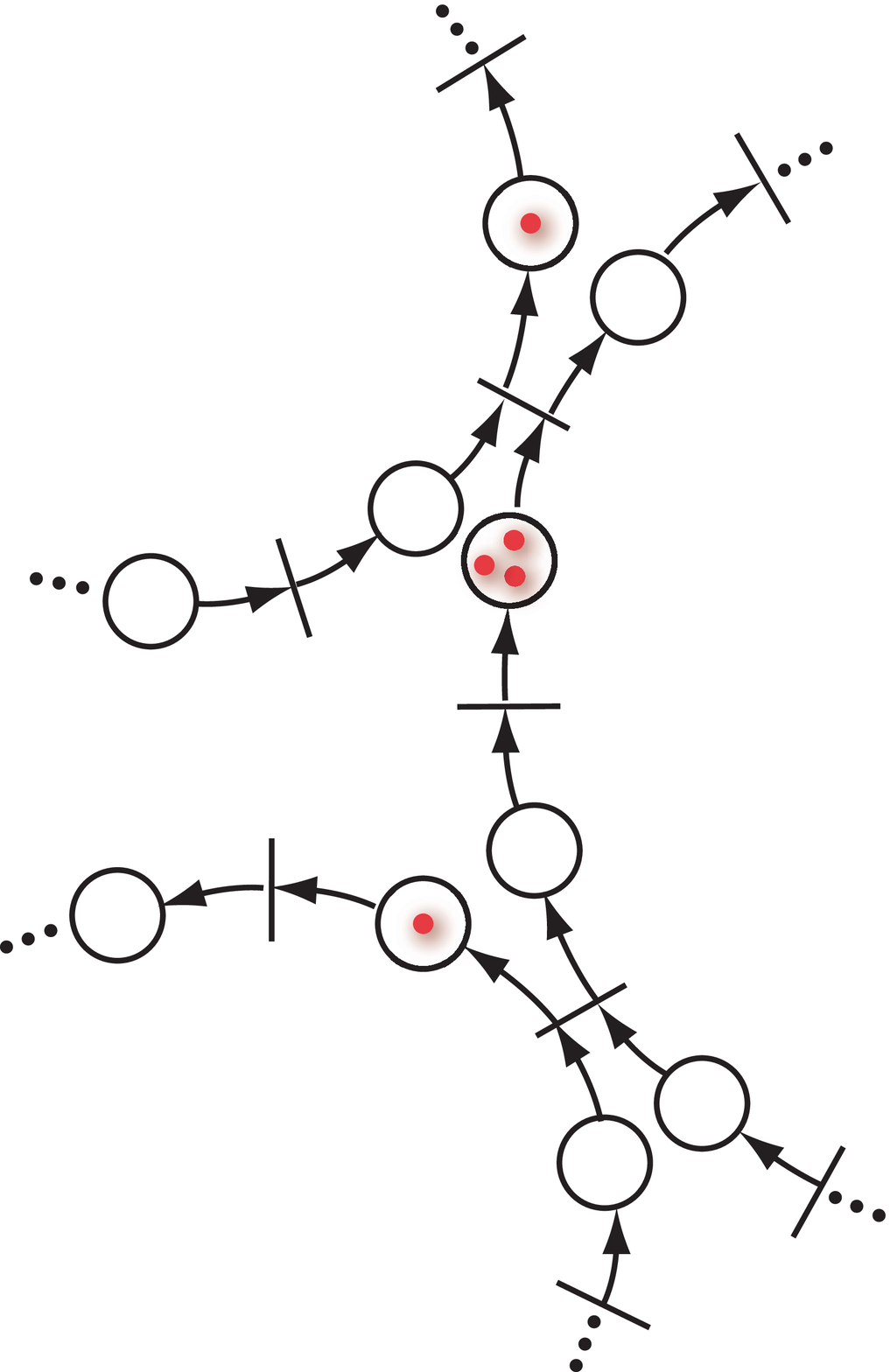,height=9cm,width=5.8cm}
\end{minipage}
}
\caption{Petri Net Embodiment}
\label{figure:petri}
\end{figure}

Another motivating application comes from wireless sensor networks, 
where power management is important.  A strategy for limiting power
consumption is to limit the number of sensors that are on at any 
time, presumably selecting enough sensors to be on for adequate 
coverage of a field of interest, yet rotating which nodes deploy 
sensors over time, to extend lifetime and to improve robustness 
with regard to variation in sensor calibration.  
One solution to this problem would be to use
clock synchronization, with a periodic schedule for sensing activity  
based on a global time.  Alternatively,   
token circulation could be considered to activate 
sensors.  Unlike a schedule purely based on synchronized clocks, 
a token-based solution provides some assurance and feedback 
in cases where nodes are faulty (\emph{e.g.,} when a token cannot be 
passed from one node to another due to a failure, such failure
may be recognized and an alarm could be triggered).  
The abstraction of tokens put into messages may also allow aggregated 
sensor data or commands to be carried 
with a token, further enabling application behavior.  Keeping tokens
apart may relate to coverage goals for the sensor network:  if tokens
circulate in parallel and satisfy some distance constraint between 
them, then the sensors that are on at any time may provide adequate 
spatial diversity over the field of interest.  Questions of 
satisfactory or optimal coverage of a field are  
beyond the scope of this paper.  Our investigation is 
confined to the problem of self-stabilizing circulation of tokens
with some desired separation between them.

\paragraph{Related Work.}

Perhaps the earliest source on self-stabilization is
\cite{Dij73}, which briefly presents an algorithm to distribute
$N$ points equally on a circle.  The algorithm given in Section
\ref{section:unknown-sep} distributes $m$ tokens equally around
a ring, however the objective is a behavior (circulating
tokens) rather than a final state.  Papers on coordinated
robot behavior, for example \cite{AFSY08,FPS08,CFPS03},
are similar to \cite{Dij73} in that a geometric, physical
domain is modeled.  Most such papers consider a final robot
configuration as the objective of distributed control and give
the robots powerful vision and mobility primitives.  Like the
example of robot coordination, our work can have a physical
control motivation, but we have a behavior as the objective.
For results in this paper, the computation model is discrete
and fully synchronous, where processes communicate only with
neighbors in a ring.  As for the sensor network motivation
sketched above, duty-cycle scheduling while satisfying coverage
has been implemented \cite{HKLY+06} (numerous network protocol
and system issues are involved in this task \cite{WX06}).
These sensor network duty-cycle scheduling efforts are not
self-stabilizing to our knowledge.

Within the literature of self-stabilization, a related problem
is model transformation.  If an algorithm $P$ is correct for
serial execution, but not for a parallel execution, then one
can implement a type of scheduler that only allows a process
$p$ of $P$ to take a step provided that no neighbor $q$ is
activated concurrently \cite{GH99};  this type of scheduling
is known to correctly emulate a serial order of execution.
The problem we consider, separating tokens by some desired
distance $d$, can be specialized to $d=2$ and be comparable to
such a model transformation.  For larger values of $d$, the
nearest related work is the general stabilizing philosopher
problem \cite{DNT07}, which considers conflict graphs between
non-neighboring philosophers.  By equating philosopher activity
(dining) to holding a token \cite{DNT07}, we get a solution to
the problem of ensuring tokens are at some desired distance,
and also allowing tokens to move as needed.  The synchronous
token behavior in this paper differs from the philosopher
problem because token circulation here is not demand-based; 
therefore solutions to the problem are obtained 
through the regular timing of token circulation.

Literature on self-stabilizing mutual exclusion includes token
abstractions \cite{Dol00}, generalizations of mutual exclusion
to $k$-exclusion or $k$-out-of-$\ell$ exclusion \cite{DHV03},
multitoken protocols for a ring \cite{FDS95}, and group mutual
exclusion \cite{CP00}.	Such literature does not constrain
tokens to be separated by some desired distance $d$ (unlike
the philosopher problem cited above), which differentiates
our work from previous multitoken protocols.  However, in
applying the methods of this paper to some applications, it
can be useful to employ self-stabilizing token or multitoken
protocols at a lower layer: Section \ref{section:appmodel}
expands on this idea of using self-stabilizing token protocols
as a basis for our work.

The idea of using the timing of token arrival to control
distributed behavior has previously investigated for balancing
(or counting) networks, which can be seen as abstractions for 
scheduling.  A token in a balancing network represents a 
locus of control;  the path of a token over time describes the 
history of accesses that one process performs on distinct shared memory
objects.  States of the junctions in these networks change as tokens
arrive and depart, and the state of a junction determines where an
arriving token will be next routed.  The relative timing of token 
arrivals to the network, and within the network, thus determine 
the pattern of flow through the network. Though such networks 
typically presume a properly initialized state,   the idea of a 
self-stabilizing behavior in a balancing network has been proposed 
\cite{HT03}.  Balancing networks are generally open networks, where
processes arrive, traverse the networks, and exit;  presumably, 
the output of such a network could be 
fed back into the same network to make a closed system. 
Balancing networks are chiefly intended for asynchronous execution, 
where the objective is to obtain some pattern in the history of arrivals 
of processes to selected shared objects.  Our goal is different: we suppose  
synchronous processes, with the time objective of keeping tokens
some distance $d$ apart at all times.  As an interesting aside, 
we note that algebraic (matrix) approaches have been found valuable both for Petri 
net analysis \cite{FCOQ92} and for combinatorial analysis of balancing networks 
\cite{CM96}. 

If we move beyond guaranteed behavior in discrete-time models 
of circulating tokens to stochastic behavior of moving particles 
in large networks, then statistical physics literature 
on traffic may be relevant.  Recent investigations 
consider capacity and efficiency metrics for flows of traffic \cite{Helbing01}, 
sometimes finding that separation between entities is important 
to shape traffic in the aggregate.  Experiments have shown that 
the density of vehicles on a unidirectional circle cannot exceed a 
critical threshold without traffic jams appearing \cite{SFK+08}; similar
results appear to hold for complex networks, validated by simulation 
\cite{Holme03,ZLPY05}.

\section{Desired Behavior}
\label{section:desiderata}

Desired properties of a token circulation protocol are  
labeled as \desd{1}--\desd{5} below.  
\begin{quote}
\desd{1}\quad At any time, $m$ tokens are present in the system. 

\desd{2}\quad The minimum distance between any two tokens is at least $d$.

\desd{3}\quad 
A token moves in each step from one process to a neighboring process.

\desd{4}\quad Every process has a token equally often; \emph{i.e.,} in an 
execution of $k$ steps, for any process $p_i$, there is a token at 
$p_i$ for approximately $k\cdot m/n$ steps.

\desd{5}\quad Following a transient failure that corrupts state variables 
of any number of processes, the system automatically recovers to 
behavior satisfying \desd{1}--\desd{4}.
\end{quote}
In many topologies, not all of \desd{1}--\desd{5} are achievable.  
As an instance, for \desd{3} to hold, the center node of a star topology 
or a simple linear chain is necessarily visited 
by tokens more often than other nodes, conflicting with \desd{4}.  
The constructions of this paper
are able to satisfy \desd{1}--\desd{5} for a ring topology.  Though it 
is straightforward to map a virtual ring on a complete walk over an 
arbitrary network, property \desd{2} may not hold:  
nodes at distance $d$ in a virtual ring could be at much smaller 
distance in the base network.  An example of a virtual ring is 
presented in Section \ref{section:conclusion}.

\section{Motivating Example} \label{section:motivation}

Though \desd{4} cannot be achieved for a star topology in which 
tokens circulate the network, there is a simple case where separation
of tokens can be obtained in an open network.  Figure \ref{figure:chain}
shows how distance between tokens can be enforced almost trivially,
by throttling the rate of tokens injected into the network.   

\begin{figure}[ht] 
\quad \framebox{ \begin{minipage}{0.17\columnwidth}
\epsfig{file=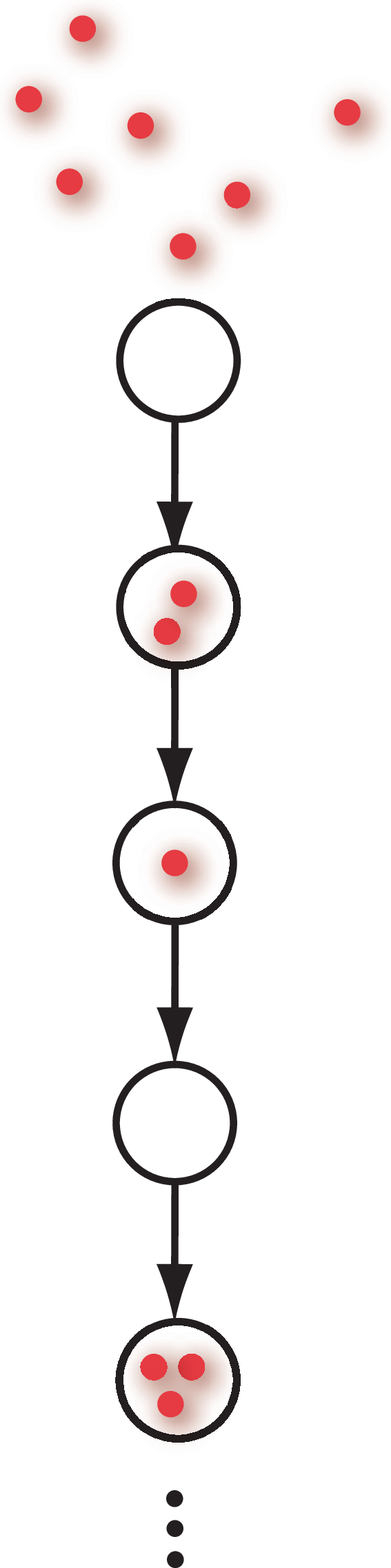,height=9cm,width=2.5cm} \end{minipage}}
\quad\framebox{ \begin{minipage}{0.2\columnwidth}
\epsfig{file=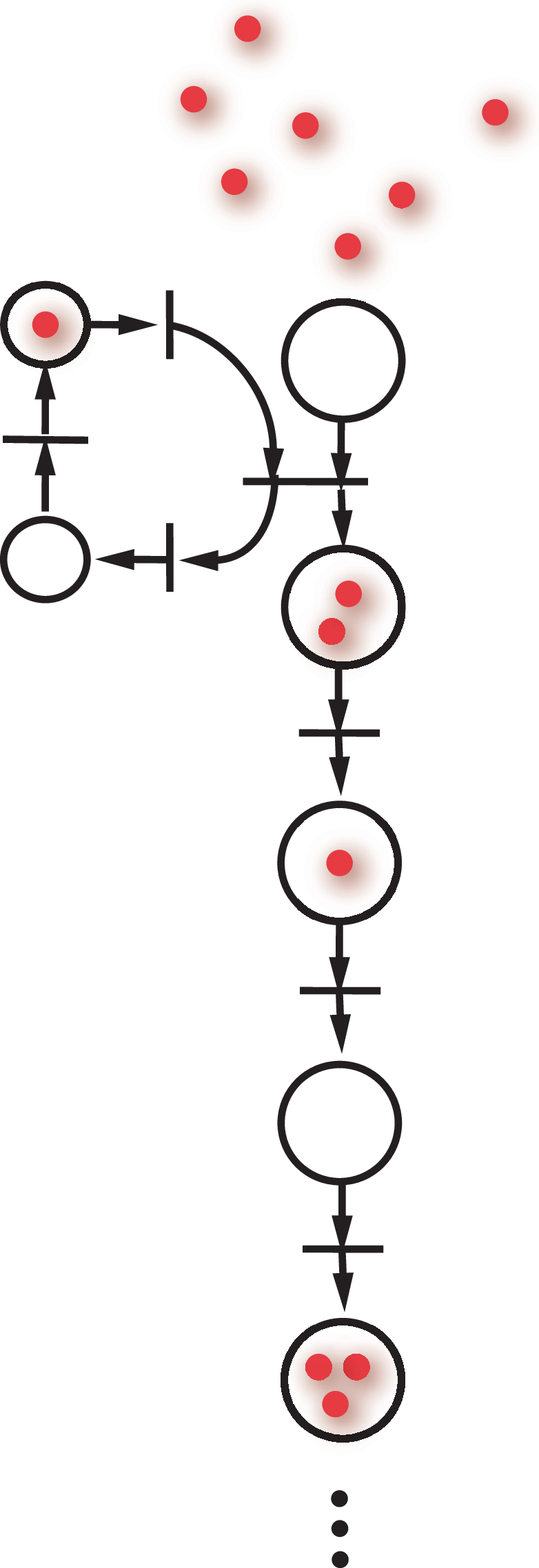,height=9cm,width=3cm} \end{minipage}}
\quad\begin{minipage}{0.5\columnwidth}
{\footnotesize
Illustrated on the far left is an open system consisting of 
a chain of processes, $p_1$, $p_2$, \ldots, with $p_1$ being
the topmost process illustrated.  Tokens are shown as dots, 
with a number of ``loose tokens'' above the chain representing
new tokens arriving from outside the system
to $p_1$.  Each process $p_i$ releases at most one 
token in each round to $p_{i+1}$.  The aim for this system is to 
ensure that, eventually, no two tokens are closer than some 
distance $d$ in the subchain from $p_2$ downward (we cannot prevent
the accumulation of tokens are $p_1$ in this open system).  On the
immediate left is a simple delay mechanism shown as a Petri net;  
the small subring and the joint transition between $p_1$ and $p_2$ 
ensures that tokens do not arrive in each round to $p_2$.  By 
adjusting the size of the subring, the target distance $d$ can 
be obtained.  }
\end{minipage}
\caption{Open System: Linear Chain} \label{figure:chain}
\end{figure}

The delay mechanism between $p_1$ and $p_2$ is shown as a token ring
conjoined to the chain.   We use simpler notation for this 
later in the paper:  a timer counts rounds between the times that
tokens are released.  If a new token arrives when the counter is zero, 
which is equivalent to a waiting token on the ring at the joint 
transition, then the new token is passed from $p_1$ to $p_2$ without
additional waiting.  The event of passing a token from $p_1$ restarts
the timer;  if a new token arrives with the counter is nonzero, then 
the new token will have to wait for the counter to reach zero before
it can be released.  This example reveals the strategy for separating 
tokens, namely to inject delay when needed.  A question arising from
Figure \ref{figure:chain} is whether the same, simple idea can work
in a closed system.  What happens, for example, if the output from
this chain, say at $p_n$, feeds back to $p_1$?  Will this simple 
delay suffice for stabilization?  Section \ref{section:known-sep} 
answers this question positively.

\section{Notation and Model}
\label{section:model}

Consider a ring of $n$ processes executing 
synchronously, in lock-step.  Each process perpetually
executes steps of a program, which are called  
\emph{local steps}.  In one \emph{global step}, every
process executes a local step.  Programs are structured
as infinite loops, where the body of a loop contains
statements that correspond to local steps.  We assume
that all processes execute the loop steps in a coordinated
manner:  for processes running the same program, all 
of them execute the first statement step in unison.  
Similarly, if two processes run distinct programs, we 
suppose they begin the body of the loop together, which
may entail padding the loop of one program to be the same
number of steps as the other program.  This assumption 
about coordination of steps is for convenience of presentation,
since it is possible to engineer all programs to have a loop
body with a single, more powerful step.  The execution of 
all steps in the loop, from first to last statement, is 
called a \emph{round}.     
\par
The notion of distance between locations in the ring can 
be measured in either clockwise or counterclockwise direction.
In program descriptions and proof arguments, it is convenient
to refer to the clockwise (counterclockwise) neighbor of a 
process using subscript notation:  process $p_i$'s clockwise
neighbor is $p_{i+1}$ and its counterclockwise neighbor is
$p_{i-1}$.  The distance from $p_i$ to itself is zero, the
clockwise distance from $p_i$ to $p_{i+1}$ is one, and the 
counterclockwise distance from $p_i$ to $p_{i+1}$ is 
$n-1$;  the counterclockwise distance from $p_i$ to $p_{i-1}$
is one, and general definitions of distance between $p_i$ and $p_j$ 
for arbitrary ring locations can be defined inductively.  
The counterclockwise neighbor of $p_i$ is called the 
\emph{predecessor} of $p_i$, and the clockwise neighbor is called
the \emph{successor}. 
\par
The local state of a process $p_i$ is specified by giving values
for its variables.  The global state of the system is an 
assignment of local states for all processes.  A protocol, specified
by giving programs for each $p_i$, should satisfy the 
desiderata of Section \ref{section:desiderata}.  A protocol is 
\emph{self-stabilizing} if, eventually, \desd{1}--\desd{5} hold 
throughout the suffix of any execution.  For simplicity, in the 
presentation of our protocols, we make some unusual model choices:
in one case, $p_i$ assigns to a variable of $p_{i-1}$;  and we 
assume that $m$ tokens are present in any initial state of any 
execution.  After presenting programs in Section \ref{section:known-sep}, 
we discuss in Subsection \ref{section:appmodel} these choices 
in reference to the two illustrative applications, 
Petri nets and sensor networks. 
 
\section{Protocol with Known Separation} 
\label{section:known-sep}

This section presents a protocol to achieve and maintain a separation
of at least $C+1$ links between tokens in the unidirectional ring.  
An implementation of the protocol uses four instantiation parameters, 
$n$, $m$, $C$, and the choice of which of two programs are used for 
nodes in the ring.  Only the separation parameter $C$ is used
in the protocol, as the domain of a counter, whereas the ring size $n$ and 
the number of tokens $m$ are unknown for the programs.   The 
separation by $C+1$ links cannot be realized for arbitrary $n>1$ and $m>1$;
we require that 
\begin{eqnarray} 
\label{eqn:separation}
m(C+1) \quad&\leq&\quad n
\end{eqnarray}
The protocol consists of two programs 
\textsf{delay} and \textsf{relay}.   
At least one process in the system executes
the \textsf{delay} and any processes not running \textsf{delay} 
run the \textsf{relay} program.  Processes running either program
have two variables, $q$ and $r$;  a process running
delay has an additional variable $c$.  
To specify the variable of a particular process, we subscript 
variables, for instance, $q_i$ is the $q$ variable of process $p_i$.   
The domains of $q$ and $r$ are nonnegative  
integers;  the domain of $c$ is the range of integers in $[0,C]$.    
\par
The $q$ and $r$ variables model the abstraction of tokens in a ring.
At any global state $\sigma$, process $p_i$ said to have $t$ tokens if 
$r_i+q_i=t$.  We say that $k$ tokens are \emph{resting} at $p_i$ if   
$r_i=k$, and $\ell$ tokens are \emph{queued} (for moving forward) if
$q_i=\ell$.  The objective of the protocol is to circulate  
$m$ tokens around the ring so that the distance from one token to
the next (clockwise) token exceeds parameter $C$, and in each 
round every token moves from its current location to the successor.  
In some cases, it is handy to refer to the value of 
a variable at a particular state in an 
execution.  The term $r_i^{\sigma}$ denotes the value of $r_i$ at a 
state $\sigma$.  In most cases, the state is implicitly the present (current)
state with respect to a description or a predicate definition.  
\par
Define the \emph{minimum clockwise distance} between $p_i$ 
and a token to be the smallest clockwise distance from $p_i$ to $p_j$ such that
$p_j$ has $t>0$ tokens.  Observe that if $p_i$ has a token, then the minimum 
clockwise distance to a token is zero.  Similarly, let the minimum 
counterclockwise distance from $p_i$ to a token be defined.  
Let $Rdist_i$ denote the minimum clockwise distance to a token for $p_i$ and
let $Ldist_i$ denote the minimum counterclockwise distance to a token for $p_i$.

\subsection{Programs}
\label{section:program}

The \textsf{delay} and \textsf{relay} programs are shown in Figure 
\ref{fig:bothlay}.  Both programs begin with steps to move any queued tokens
from the predecessor's queue to rest at $p_i$.   The 
\textsf{relay} program enqueues one token, if there are any resting tokens,
in line 3 of the program.  The \textsf{delay} program may or may not enqueue
a token, depending on values of the counter $c_i$ and the number of resting
tokens $r_i$.  In terms of a Petri net, the \textsf{relay} program corresponds to simple,
deterministic, unit delay with at most one token firing in 
any step on the output transition.   The \textsf{delay} program expresses
a joint transition, with two inputs and two outputs:  the variable $c_i$ becomes
a ring of $C+1$ places and line 4 of \textsf{delay} represents the joint 
transition. 
    
\begin{figure}[ht]
\framebox[\columnwidth][l]{
\begin{minipage}{0.5\columnwidth}
\begin{tabbing}
x \= xxxx \= xx \= xx \= xx \= xx \= xx \= \kill
\> \textsf{delay} :: \\
\>\> \texttt{do forever} \\
{\scriptsize 1} \>\>\> $r_i \;\leftarrow~ r_i + q_{i-1}$ ~\texttt{;} \\
{\scriptsize 2} \>\>\> $q_{i-1} \;\leftarrow~ 0$ ~\texttt{;} \\ \\
{\scriptsize 3} \>\>\> \texttt{if} \quad $c_i>0$ \quad \texttt{then} \\
\>\>\>\>\> $c_i\;\leftarrow~ c_i-1$ \\
{\scriptsize 4} \>\>\> \texttt{else if}  
$c_i=0 \;\wedge\; r_i>0$ \texttt{then}\hspace*{2em}\\
\>\>\>\>\> $c_i \;\leftarrow\; C$ ~\texttt{;} \\
\>\>\>\>\> $r_i \;\leftarrow\; r_i - 1$ ~\texttt{;} \\
\>\>\>\>\> $q_i \;\leftarrow\; q_i + 1$ \\ 
\end{tabbing} 
\end{minipage}
\vrule
\begin{minipage}{0.5\columnwidth}
\begin{tabbing}
xx \= xx \= xxxx \= xx \= xx \= xx \= xx \= xx \= \kill
\>\> \textsf{relay} :: \\
\>\>\> \texttt{do forever} \\
\> {\scriptsize 1} \>\>\> $r_i \;\leftarrow~ r_i + q_{i-1}$ ~\texttt{;} \\
\> {\scriptsize 2} \>\>\> $q_{i-1} \;\leftarrow~ 0$ ~\texttt{;} \\ \\
\> {\scriptsize 3} \>\>\> \texttt{if} \quad $r_i>0$ \quad \texttt{then} \\
\>\>\>\>\> $r_i \;\leftarrow~ r_i - 1$ ~\texttt{;} \\ 
\>\>\>\>\> $q_i \;\leftarrow~ q_i + 1$ \\ 
\end{tabbing}
\end{minipage}
}
\caption{\textsf{delay} and \textsf{relay} programs}
\label{fig:bothlay}
\end{figure}

In application, it is possible that all $n$ processes run the 
\textsf{delay} program, and no process runs \textsf{relay}.  This 
would be a \emph{uniform} protocol to achieve \desd{1}--\desd{5}.  
An advantage of including \textsf{relay} processes can be to limit
the cost of construction for physical embodiments of the logic.  
Using multiple \textsf{relay} processes can model 
more general cases of token delay:  a consecutive sequence of $k$ 
\textsf{relay} processes is equivalent to a process that always
delays an arriving token by $k$ rounds.

\subsection{Application to Models}
\label{section:appmodel}

\paragraph{Petri Net.}
It is usual for self-stabilization that transient faults, 
which inject variable corruption, are responsible for creating 
new initial states, and the event of a transient fault is not 
explicitly modeled.  However for an application where tokens 
represent physical objects, which is plausible for Petri nets, 
a transient fault neither destroys nor creates objects.  Thus 
we think it reasonable to suppose 
that $m>1$ tokens satisfying (\ref{eqn:separation}) are present 
in any initial state.  

Observe that line 2 of either \textsf{delay} or \textsf{relay} 
has $p_i$ assign to $q_{i-1}$ (whereas the usual convention in the
literature of self-stabilization is that a process may only assign
to its own variables).  The assignment $q_{i-1}\;\leftarrow\;0$ 
models the transfer of a token from a transition to its target 
place in a Petri net.  For the firing of a Petri net transition, 
$p_i$ increments $q_i$ in line 4 of \textsf{delay} or line 3 
or \textsf{relay}.  Figure \ref{figure:petri} illustrates both 
\textsf{relay} and \textsf{delay} programs.  The portions of the 
two rings on the left side of the figure are modeled by the $c$ 
variables in \textsf{relay} nodes;  these are ``minor'' rings with
$C+1$ nodes, whereas the ``major'' ring has $n$ nodes.  The situation 
of a token on a minor ring being ready for a transition shared by 
the major ring is modeled by $c_i=0$.  Observe that when a token  
on the major ring is present at the same transition where a minor ring
token exists, then transition firing is enabled at line 4, because 
$r_i>0$ and $c_i=0$.  We assume that tokens of major and minor rings are of 
different nature; a transient fault cannot move a token from minor to major
or from major to minor ring.  A transient fault can move tokens arbitrarily
on their respective rings.  

\paragraph{Standard Models.} We first briefly review some 
conventions from the literature of self-stabilization. 
A typical model for self-stabilizing protocols
is the shared variable model in which each process has access to
some variables written by neighbor processes. 
In one atomic step, a process reads neighbor variables, 
performs some local computation, and writes to its variables.  
A system execution is a sequence of configurations, each configuration
denoting the state of every process;  between each consecutive pair of
configurations in the execution there is a transition consisting of a
some set of process steps (at most one step for each process).  
Standard models specify a scheduler, which selects, at each state in an execution, the process or 
processes that may take a step.  The specification of the scheduler and the
set of possible initial states is enough to generate all possible executions.  
Schedulers may be synchronous (all processes
take a step in unison) or asynchronous;  an extreme case of asynchrony is
the central daemon scheduler, which selects just one process to take a 
step.  The usual notation for programs is the guarded statement notation, 
wherein the program for each process is a set of guarded assignment statements.
A guarded assignment is \emph{enabled} at a particular state if the guard
evaluates to true at that state.  To avoid stuttering (repeated consecutive configurations
in an execution), schedulers only select processes with enabled statements; 
also, programs are written so that any enabled statement should falsify its guard 
when executed  (this is typically easy to verify for the central daemon scheduler).  
An execution is finite if its terminal state has no enabled statement, otherwise executions are
infinite.  Schedulers may have a number of choices of processes to select
for the next step at a particular state.  A fairness property of a 
scheduler is some policy to limits choices it makes over the course of an execution.
An unfair scheduler has maximum freedom in the number and guard    
selection choice at any state.  Experience has shown that programs are 
simplified when more assumptions can be made about the scheduler;  for some
problems, self-stabilization is not possible without the central daemon 
hypothesis of one process stepping in any state transition.  
\par
Considering how \textsf{delay/relay} may be fit to standard models, 
we see several obstacles: \itp{i} $p_i$ 
assigns to $q_{i-1}$, which violates the rule of a process assigning to 
its own variables only; \itp{ii} we have assumed that all process start
their cycles together, which may not hold for an initial configuration; 
\itp{iii} the number of tokens $m$ is supposed 
positive and constrained by (\ref{eqn:separation}) in the initial state; and
\itp{iv} execution is synchronous.  Point \itp{iv} is within the bounds
of self-stabilization models, though one might hope for a realization of 
the same result for asynchronous models.  Point \itp{ii} will not be a 
concern if the programs \textsf{delay} and \textsf{relay} can each be 
reduced to a single guarded assignment;  this is not a significant challenge,
and we leave this as an exercise to the reader.  We continue examining
the other points in following paragraphs.
\par
Regarding \itp{i}, there are two cases to consider, a synchronous or 
asynchronous execution model.  In the case of synchronous execution, rewriting
the program as a single guarded assignment can eliminate the assignment 
to $q_{i-1}$ in favor of having $p_i$ rewrite $q_i$, either to zero or to 
some new value if a token is queued.  Because each process reads $q_{i-1}$ 
in each synchronous step, the logic of \textsf{delay} and \textsf{relay} is
preserved by this rewriting.  However, for an asynchronous model, some 
transformation is needed.  For instance, a self-stabilizing protocol with 
acknowledgment \cite{AB93} might be used to convey a token from $p_{i-1}$ to $p_i$ 
(note that this approach would entail bidirectional communication between
$p_{i-1}$ and $p_i$).  Alternatively, the task of passing a token from 
$p_{i-1}$ to $p_i$ could be handled by using a conventional self-stabilizing
protocol, which we explain next under the discussion of point \itp{iii}. 
\par
Point \itp{iii} raises the possibility that the initial state may not 
have $m>0$ tokens satisfying (\ref{eqn:separation}).  Two ways to deal with
this possibility are \emph{active monitoring} and \emph{definitional} 
approaches.  
\begin{itemize}
\item The idea of active monitoring is to periodically sample the 
number of tokens and take appropriate measures for an incorrect value.  
Note that taking a sample is neither instantaneous nor reliable.  A sample
would need count the number of tokens that arrive to $p_i$, which should
be $m$ tokens over $n$ time units when behavior is legitimate.  This type
of sampling is unreliable from an arbitrary initial state, because whatever
variables are used for counting and measuring time are subject to transient
fault corruption, hence false detection of an illegitimate state is 
possible.  Moreover, if more than one process engages in sampling and 
correction, it could be that correction by one interferes with correction
by another.  The mechanism of distributed reset \cite{AG94} might need to 
be employed for active monitoring and correction.  In addition to the 
complexity of active monitoring, the potential for inserting and deleting 
tokens (perhaps unnecessarily) during convergence could be undesirable for
the application using the tokens.  
\item The alternative to active monitoring is
the definitional approach.  Here, the system is built either from $m$ 
independent self-stabilizing token rings or from a self-stabilizing multitoken
ring protocol that has $m$ tokens in a legitimate state.  We sketch the case
of $m$ independent token rings.  For each token ring, there can be more than
one token in an initial state.  Provided that each process fairly includes
steps from each of the $m$ token rings, each of these eventually converges
to having a single token.  A benefit of the definitional approach is that 
the token-passing mechanism is unidirectional:  by writing to some variable
designated for the token, the token is automatically available to the 
successor process.  
\end{itemize}
\par  
Can the \textsf{delay/relay} protocol be extended to asynchronous scheduling?
To reckon with \itp{iv}, some relaxation of \desd{2}-\desd{3} is needed, because
no mechanism in an asynchronous model can assure that two processes at distance
$d$ release tokens simultaneously.  A natural adaptation of the synchronous
protocol is to leverage a self-stabilizing synchronizer, or a self-stabilizing
phase clock. 
\begin{quote}
A phase clock protocol equips each process $p_i$ with a \emph{clock} variable  
$clock_i$.  A clock has domain $[0,M]$, where $M$ has a lower bound related to the diameter 
of the network, but can otherwise be freely chosen;  we suppose $M\bmod d = 0$ for 
our design.  The two crucial properties of
a clock are \itp{a} it increments modulo $M$ infinitely often in any execution, 
and \itp{b} $|clock_i-clock_j|\leq 1$ for neighbors $i$ and $j$.  
\end{quote}
An adaptation based on a phase clock consists of allowing a cycle (translated
into a guarded assignment) of \textsf{delay/relay} at $p_i$ to execute only when 
$clock_i\bmod d = 0$ and the phase clock at $p_i$ is enabled to increment.  
The phases where $clock_i\bmod d = \ell$, for $\ell\in[1,d-1]$ are ``idle''
with respect to progress of \textsf{delay/relay}.  
Thus, $p_i$ can only release a token when $clock_i\bmod d = 0$.   Property 
\itp{b} implies that a process $p_k$ at distance $d$ from $p_i$ satisfies
$|clock_i-clock_j|\leq d$.  Therefore, $clock_k$ could be ``behind'' 
$clock_i$ by $d$ phases when $p_i$ releases a token.  Increments to the phase
clock at $p_i$ continue during idle phases.  The modulus $d$ provides sufficiently 
many idle phases to ensure that $p_k$ would release a token (provided it has one 
ready to release), before $p_i$ again encounters $clock_i\bmod d=0$.  
\par
Two difficulties need to be addressed in implementing this combining of 
\textsf{delay/relay} with a phase clock.  First, we note that  
\desd{2} can be violated in the combination, because $p_i$ may release a token
before $p_k$ does, resulting in two tokens at distance $d-1$; some revision to 
$d$ or \desd{2} is needed to handle this case.  Another detail 
to cover in the adaptation is to prevent $p_{i+1}$ from immediately passing the
token it receives from $p_i$, which would occur if $clock_{i+1}\bmod d=0$ upon 
reception.  We omit further detail in this outline of adapting \textsf{delay/relay} 
to an asynchronous scheduler.

\paragraph{Wireless Sensor Network.}  
Wireless networks use messages rather than shared variables for communication.  
Several papers have proposed some implementation patterns for a shared variable
model built upon sensor network abstractions.  However, the definitional approach
outlined above for tokens is well-suited to a message-passing architecture because
releasing a token consists of a single write to a shared variable that is not 
again written until the next time a token is released.  This allows the write to 
the shared variable to be replaced by a message transmission, from $p_i$ to 
$p_{i+1}$ (such a message would contain values for all $m$ independent 
token rings being emulated in the definitional construction).  
\par
Wireless sensors do have real-time clocks, though these may not be synchronized
in an initial state.  Therefore, a self-stabilizing clock synchronization protocol
is warranted, so that the synchronous execution of \textsf{delay/relay} can be 
realized.  With synchronized clocks, it is possible to define a recurring time
interval so that all nodes start and end the execution of a \textsf{delay/relay}
cycle together.  
\par
The preceding discussion assumes that messages are reliable.  In practice, 
messages may be lost, which necessitates retransmission.  The number of 
retransmits could be variable, which is problematic for defining intervals
supporting synchronous execution of cycles.  Of course, any protocol for wireless
sensor networks faced with message loss is, at best, probabilistically valid.    

\subsection{Verification}
\label{section:verification}

A legitimate state for the protocol is a global state predicate, 
defining constraints on values for variables. 
To define this predicate, let $tokdist$ denote the minimum, 
taken over all $i$ such that $r_i+q_i>0$, of $Rdist_i$.  
The predicate $\textsf{delay}_i$ is \emph{true} for process 
$p_i$ running \textsf{delay} and \emph{false} for the 
\textsf{relay} processes.
\begin{definition}
A global state $\sigma$ is legitimate iff 
\begin{eqnarray}
 \sum_i q_i = m & \quad\wedge\quad & \sum_i r_i = 0 \quad\wedge\quad (\forall i:: \; q_i\leq 1) \label{leg:0} \\
 &\;\wedge\;& tokdist > C \label{leg:1} \\
 &\;\wedge\;& \left(\forall i: \;\textsf{delay}_i\;\wedge\; 
   c_i>0 \;\wedge\; q_i=0: \; 
   Rdist_i=C-c_i\right) \label{leg:2} \\ 
 &\;\wedge\;& \left(\forall i: \;\textsf{delay}_i \;\wedge\;q_i=0: 
   \; Ldist_i > c_i\right) \label{leg:3} \\
 &\;\wedge\;& \left(\forall i: \;\textsf{delay}_i \;\wedge\;q_i=1: 
   \; c_i=C\right) \label{leg:4} 
\end{eqnarray}
\end{definition}   
In an initial state, variables may have arbitrary values in their domains, 
subject to constraint (\ref{eqn:separation}).  

\begin{lemma}[Closure] \label{lemma:closure0}
Starting from a legitimate state $\sigma$, 
the execution of a round results in a legitimate state $\sigma'$. 
\end{lemma}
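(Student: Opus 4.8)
The plan is to compute the global state $\sigma'$ produced by one round and then verify each conjunct of the legitimacy predicate in turn. First I would fix the semantics of a round: since all processes execute line~1 and then line~2 in unison, after these two lines every queue is empty and each $p_i$ has $r_i = r_i^\sigma + q_{i-1}^\sigma$; because $\sigma$ is legitimate we have $r_i^\sigma = 0$ and $q_{i-1}^\sigma \in \{0,1\}$ by (\ref{leg:0}), so the tokens of $\sigma$ are exactly the queued tokens and each has been moved to rest at its clockwise successor. The release phase (line~3 of \textsf{relay}, lines~3--4 of \textsf{delay}) then re-queues tokens: a \textsf{relay} node re-queues whenever it holds a resting token, whereas a \textsf{delay} node re-queues only when $c_i = 0$.

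The crux of the argument is a \emph{rigid rotation} lemma: every token of $\sigma$ advances exactly one link, so the multiset of token positions in $\sigma'$ is that of $\sigma$ rotated clockwise by one. This holds provided every arriving token is re-queued, which can only fail at a \textsf{delay} node that receives a token while $c_i > 0$. I would rule this out using the counter invariants: a \textsf{delay} node with $q_i^\sigma = 0$ and $c_i^\sigma > 0$ satisfies $Ldist_i > c_i \geq 1$ by (\ref{leg:3}), so its predecessor holds no token and none arrives; a \textsf{delay} node with $q_i^\sigma = 1$ has $c_i^\sigma = C$ by (\ref{leg:4}) but again receives nothing, since (\ref{leg:1}) forbids a token one link behind it. Consequently every resting token is released, giving $\sum_i r_i' = 0$ and $q_i' \leq 1$; token conservation then yields $\sum_i q_i' = m$, establishing (\ref{leg:0}). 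Because the rotation is rigid, all pairwise clockwise gaps are preserved, so $tokdist' = tokdist^\sigma > C$, which is (\ref{leg:1}).

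The remaining work is counter bookkeeping for (\ref{leg:2})--(\ref{leg:4}), which I would carry out by case analysis on the behaviour of each \textsf{delay} node: it either \emph{releases} (then $q_i' = 1$ and $c_i' = C$, giving (\ref{leg:4}) and making (\ref{leg:2})--(\ref{leg:3}) vacuous), \emph{idles} with $c_i^\sigma = 0$ and no arrival (then $c_i' = 0$ and $p_i$ is token-free, so (\ref{leg:3}) is trivial), or \emph{decrements} (then $c_i' = c_i^\sigma - 1$). For the decrement case I would use the two shift identities implied by rigid rotation: when no token lands on $p_i$ we have $Rdist_i' = Rdist_i^\sigma + 1$ and $Ldist_i' = Ldist_i^\sigma - 1$. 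Feeding these into the hypotheses $Rdist_i^\sigma = C - c_i^\sigma$ from (\ref{leg:2}) and $Ldist_i^\sigma > c_i^\sigma$ from (\ref{leg:3}) reproduces the same relations with $c_i'$ in place of $c_i^\sigma$; the sub-case $q_i^\sigma = 1$ (where $c_i^\sigma = C$) is handled separately using $Rdist_i^\sigma = 0$ and $Ldist_i^\sigma \geq C+1$ from (\ref{leg:1}). I expect the main obstacle to be exactly this counter--distance bookkeeping: one must justify the shift identities (which rest on showing $p_{i-1}$ carries no token, again a consequence of (\ref{leg:1}) and (\ref{leg:3})) and keep the case split on $q_i^\sigma \in \{0,1\}$ and on the three counter behaviours consistent across all four conjuncts simultaneously.
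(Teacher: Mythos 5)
Your proposal is correct and takes essentially the same route as the paper's proof: the paper's case analysis on \textsf{delay} processes ($q_i=1$ versus $q_i=0$, with subcases $c_i=0$ and $c_i>0$) rests on exactly your central observation, which it invokes as ``every token moves to the successor starting from a legitimate state,'' justified as you do by using (\ref{leg:1}), (\ref{leg:3}) and (\ref{leg:4}) to rule out an arrival at a \textsf{delay} node with $c_i>0$. Your write-up merely makes explicit the rigid-rotation step and the shift identities $Rdist_i' = Rdist_i + 1$ and $Ldist_i' = Ldist_i - 1$ that the paper leaves implicit in its counter--distance bookkeeping.
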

\begin{proof}  
The conservation of tokens expressed by (\ref{leg:0}) is simple to 
verify from the statements of \textsf{delay} and \textsf{relay} programs, 
so we concentrate on showing that (\ref{leg:1})--(\ref{leg:4}) are 
invariant properties.  Assume that $\sigma$ is a legitimate state. 
We consider two cases for a process $p_i$ running \textsf{delay}, 
either there is no token at $p_i$ and $q_i=0$, or $q_i=1$ at $\sigma$.  

$\blacktriangleright q_i=1:$ observe that $c_i=C$ by (\ref{leg:4}).  
For $\sigma'$ we have $r_i=0$ because from (\ref{leg:1}) there is no 
token at $p_{i-1}$ and we have $q_i=0 \;\wedge\; c_i=C-1$ by lines
1-2 of either \textsf{delay} or \textsf{relay} at $p_{i+1}$, and 
line 3 of \textsf{delay} at $p_i$.  This validates (\ref{leg:0}) with
respect to the token passed, and (\ref{leg:1}) holds because every 
token moves to the successor starting from a legitimate state.  
Property (\ref{leg:2}) holds at $\sigma'$ with respect to $p_i$ because
$Rdist_i=1=C-c_i$.  Finally, (\ref{leg:3}) is validated for $p_i$ 
because, if (\ref{leg:3}) holds for $\sigma$ when $c_i=C$, then a 
token moving one process closer to $p_i$ validates (\ref{leg:3}) 
by $c_i=C-1$ at $\sigma'$.   

$\blacktriangleright q_i=0:$ there are two subcases, either 
$c_i=0$ or $c_i>0$.  In the former case, if no token arrives to 
$p_i$ in the transition from $\sigma$ to $\sigma'$, 
properties (\ref{leg:0})-(\ref{leg:4}) directly hold with respect to $p_i$ 
in $\sigma'$.  If a token arrives to $p_i$, then $q_i=1\;\wedge\;c_i=C$ 
result by line 4 of \textsf{delay}, and we use properties 
(\ref{leg:1})-(\ref{leg:4}) of $p_{i-1}$ at $\sigma$ to infer that
the same properties hold of $p_i$ at $\sigma'$.  If $c_i>0$ at 
$\sigma$, then by (\ref{leg:1})-(\ref{leg:3}) and legitimacy of 
all processes within distance $C+1$ in either direction from $p_i$, 
tokens move to the successor process while $c_i$ decrements, 
which establishes (\ref{leg:1})-(\ref{leg:3}) for $p_i$ at 
$\sigma'$. 
\end{proof}

To prove convergence, we start with some elementary claims and 
define some useful terms. 
Suppose rounds are numbered in an execution, round $t$ starts
from state $\sigma$, and that $q_{i-1}^{\sigma}=v$.  For such a situation,
we say that $v$ tokens \emph{arrive} at $p_i$ in round $t$.  

\begin{claim} \label{claim:convX0} 
In any execution, $\sum_i r_i+q_i \;=\;m$ holds invariantly.
\end{claim}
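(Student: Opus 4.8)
The plan is to prove Claim \ref{claim:convX0} by induction on the number of completed rounds, identifying a global state with a round boundary as in the paragraph preceding the claim. The base case is immediate from the modeling assumption of Section \ref{section:model} (echoing \desd{1}) that exactly $m$ tokens are present in any initial state, recalling that the number of tokens at $p_i$ is $r_i + q_i$. So the only real content is the inductive step: I must show that one round carries any state with $\sum_i (r_i + q_i) = m$ to another state with the same total. Since the quantity is conserved by each round and equals $m$ initially, the invariant follows at every round boundary.

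For the inductive step I would split a round into two phases following the line numbering of \textsf{delay} and \textsf{relay}: call lines 1--2 the \emph{harvest} phase and the remaining lines the \emph{enqueue} phase. In the harvest phase every process $p_i$ executes $r_i \leftarrow r_i + q_{i-1}$ in unison and then $q_{i-1} \leftarrow 0$ in unison. As the index $i-1$ ranges over all processes when $i$ does, every queue variable is zeroed, and writing unprimed symbols for round-start values and $r_i'$ for the value after lines 1--2, we have
\begin{equation}
\sum_i r_i' \;=\; \sum_i (r_i + q_{i-1}) \;=\; \sum_i r_i + \sum_i q_i ,
\end{equation}
with every $q_i' = 0$; hence $\sum_i (r_i + q_i)$ is unchanged by the harvest phase. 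In the enqueue phase each process either leaves its variables untouched (a \textsf{delay} process with $c_i > 0$, or any process with $r_i = 0$) or performs $r_i \leftarrow r_i - 1$ together with $q_i \leftarrow q_i + 1$, which preserves $r_i + q_i$ locally and so preserves the global sum. Composing the two phases shows the round conserves the total.

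The step I expect to be the main obstacle is justifying the harvest phase rigorously in the presence of the unusual cross-process assignment flagged in Subsection \ref{section:appmodel}: $p_i$ writes $q_{i-1}$ in line 2, while $p_{i-1}$ also writes $q_{i-1}$ in its own line 4. The delicate point is to argue that no token is double-counted or lost at the shared variable $q_{i-1}$. Here I would lean on the lockstep synchrony of Section \ref{section:model}: all processes execute line 2 in the same global step, strictly before any process executes line 4, so the clearing of $q_{i-1}$ by $p_i$ (harvesting the tokens $p_{i-1}$ queued in the previous round) and the later refill of $q_{i-1}$ by $p_{i-1}$ (queuing at most one fresh token) occur at distinct steps and never race. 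With this ordering made explicit, the value $q_{i-1}$ read by $p_i$ in line 1 is exactly the value zeroed in line 2, so the harvest is a pure relocation of tokens from queues to rests, and the enqueue phase merely redistributes tokens within each process between its rest and its queue. This establishes conservation round by round and hence the claimed invariant.
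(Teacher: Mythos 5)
Your proof is correct and takes essentially the same approach as the paper's: the paper likewise grounds the base case in the modeling assumption that $m$ tokens are present initially, and argues that the synchronous execution of lines 1--2 relocates queued tokens without loss or duplication while line 4 of \textsf{delay} (line 3 of \textsf{relay}) conserves $r_i+q_i$ within a process. Your harvest/enqueue decomposition and the explicit write-ordering argument for the shared variable $q_{i-1}$ merely spell out details the paper leaves implicit in its lockstep synchrony assumption.
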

\begin{proof}  As explained in Section \ref{section:model},
$m>1$ tokens are present in the initial state, represented by 
$r_i$ and $q_i$ variables.  Statements of \textsf{delay} or 
of \textsf{relay} conserve the number of tokens in the system, 
because we assume that all processes execute lines 1 and 2 synchronously 
in any round.  Line 4 of \textsf{delay} or line 3 of \textsf{relay}
similarly conserve the number of tokens in a process.
\end{proof}
\begin{claim} \label{claim:conv0}
Within one round of any execution, 
\begin{eqnarray}
&& ( \; \forall \, i:: \quad q_i\leq 1 \;)  \label{cla:0}
\end{eqnarray}
holds and continues to hold invariantly for all subsequent rounds.
\end{claim}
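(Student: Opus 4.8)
The plan is to trace the effect of a single synchronous round on the queue variables, exploiting the lock-step execution assumption from Section~\ref{section:model} that every process completes line~2 before any process begins line~3. The crucial observation is that line~2, the assignment $q_{i-1}\leftarrow 0$, is executed by every process $p_i$ at the same step; since the predecessor map $i\mapsto i-1$ ranges over all $n$ ring positions, once all processes have finished line~2 every queue variable $q_j$ has been reset to $0$, regardless of its (possibly corrupted) value at the start of the round. I would make this covering argument explicit, and also confirm there is no write--write conflict at line~2, since in that step $q_{i-1}$ is written only by $p_i$.

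Next I would examine the remaining lines to bound how much each queue can grow back. In \textsf{relay}, the only assignment to $q_i$ is the single increment in line~3, taken at most once and only when $r_i>0$; in \textsf{delay}, the only assignment to $q_i$ is the single increment in line~4, which sits in one branch of a mutually exclusive \texttt{if/else}. Hence during the tail of the round each $q_i$ is incremented by at most one from its post-line-2 value of $0$, so $q_i\in\{0,1\}$ once the round completes. This establishes (\ref{cla:0}) by the end of the first round.

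For the invariance claim I would observe that the argument applies verbatim to every subsequent round: line~2 again zeroes all queues and the later increment raises each $q_i$ by at most one, so $q_i\le 1$ is reestablished at the end of each round. I would further note that after the first round the predicate is never transiently violated, since line~1 writes no $q$ variable, line~2 only lowers queue values (to $0$), and line~3/4 raises each by at most one; thus (\ref{cla:0}) holds continuously from the completion of line~2 of the first round onward.

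The step I expect to require the most care is justifying that line~2 clears \emph{every} queue variable rather than only some subset of them. This rests entirely on the synchronous, lock-step semantics---all processes reach line~2 together---combined with the fact that the indices $\{\,i-1\,\}$ exhaust the ring; once that bijection is spelled out, the rest of the argument is a direct reading of the two programs.
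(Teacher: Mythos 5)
Your proposal is correct and follows essentially the same route as the paper, whose entire proof is the one-line observation that in every round line~2 assigns $q_{i-1}\leftarrow 0$ and the program may then assign $q_i\leftarrow 1$. Your write-up simply makes explicit the details the paper leaves implicit---that the predecessor map $i\mapsto i-1$ covers every queue variable exactly once, that lock-step execution puts all line-2 writes before any increment, and that each program increments $q_i$ at most once per round---so it is a faithful elaboration rather than a different argument.
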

\begin{proof}
In every round, line 2 of \textsf{delay} or \textsf{relay} assigns 
$q_{i-1}\leftarrow 0$, and may assign $q_i\leftarrow 1$.
\end{proof}

For the remainder of this section, we consider only executions 
that start with a state satisfying (\ref{cla:0}).  For such 
executions, a corollary of Claim \ref{claim:conv0} is:  at most one token 
arrives to any process in any round.  

\begin{claim} \label{claim:conv1} 
If $m>0$, then for every execution of the protocol 
and $0\leq k<C \;\wedge\; 0\leq i<n$, the variable $c_i=k$ 
at infinitely many states.
\end{claim}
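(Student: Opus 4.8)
The plan is to reduce the statement to two facts about a \textsf{delay} process $p_i$: that $c_i$ returns to $0$ infinitely often (which is essentially automatic) and that $c_i$ is \emph{reset} to $C$ infinitely often (which is the substantive part, requiring that tokens keep arriving at $p_i$). Together these force every intermediate value to be visited. First I would record the easy half: line 3 of \textsf{delay} strictly decrements $c_i$ in every round in which $c_i>0$, and the only statement that raises $c_i$ (line 4) executes exactly when $c_i=0$. Hence $c_i$ can never stay positive forever, so $c_i=0$ holds infinitely often in any execution.

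The engine of the argument is a propagation lemma phrased in terms of \emph{firing}, by which I mean a process executing the assignments of line 4 of \textsf{delay} (or line 3 of \textsf{relay}), thereby moving a resting token into its queue. If $p_{j}$ fires in round $t$, then its queue holds a single token (consistent with Claim~\ref{claim:conv0}), so line 1 of $p_{j+1}$ raises $r_{j+1}$ above $0$ in round $t+1$: a token \emph{arrives} at $p_{j+1}$. Once $r_{j+1}>0$, that token cannot disappear except through a firing of $p_{j+1}$, and since $c_{j+1}$ decrements to $0$ within $C$ rounds while $r_{j+1}$ stays positive, $p_{j+1}$ is forced to fire within $C$ rounds of the arrival (immediately, if $p_{j+1}$ runs \textsf{relay}). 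Thus \emph{$p_j$ fires infinitely often implies $p_{j+1}$ fires infinitely often}, and because the processes form a cycle, either every process fires infinitely often or none does.

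To exclude the degenerate alternative I would rule out a global deadlock. Suppose, for contradiction, that after some round $T$ no process ever fires again. Then line 2 clears every queue within one round and nothing replenishes the queues, so after round $T+1$ all $m$ tokens are resting, with $m>0$ by Claim~\ref{claim:convX0}. Some $p_j$ then has $r_j>0$: if $p_j$ runs \textsf{relay} it fires on its next round, and if it runs \textsf{delay} it fires once $c_j$ reaches $0$, within $C$ rounds. Either way a firing occurs after $T$, contradicting the assumption. Hence some process fires infinitely often, and by the propagation lemma so does $p_{i-1}$. I expect this no-deadlock step, combined with the cyclic propagation, to be the crux of the proof: it is where one must show that tokens genuinely keep circulating rather than settling into a configuration that permanently starves $p_i$ of arrivals.

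Finally I would close the loop. Each firing of $p_{i-1}$ delivers a token to $p_i$, and each such arrival forces $c_i$ to be set to $C$ within $C$ rounds, namely at the round in which $c_i=0$ while a token rests; since arrivals recur infinitely often, $c_i=C$ infinitely often. After each reset, line 3 drives $c_i$ through the clean monotone descent $C, C-1, \ldots, 1, 0$ over consecutive rounds, and no reset can interrupt this descent because line 4 fires only when $c_i=0$. Consequently every value $k$ with $0\le k<C$ (indeed every value in $[0,C]$) is attained at infinitely many states, which is the claim.
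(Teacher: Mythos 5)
Your proof is correct, but it takes a genuinely different route from the paper's. The paper argues by contradiction in a token-centric way: since $m>0$, some token gets enqueued within at most $C$ rounds, so token movement recurs infinitely often; it then invokes a modeling convention that token queuing is first-in first-out, so that one token moving infinitely often clockwise implies \emph{all} tokens circulate past every process, whence $p_i$ sees infinitely many arrivals and lines 3--4 of \textsf{delay} execute infinitely often, contradicting the assumption that some value $k$ is eventually never attained. You instead work process-centrically: a firing-propagation lemma (if $p_j$ fires infinitely often, the delivered tokens force $p_{j+1}$ to fire within at most $C$ rounds of each arrival, so $p_{j+1}$ fires infinitely often, and this circulates around the cycle) combined with a no-deadlock argument (if no process ever fired after some round $T$, all $m>0$ tokens would be resting by Claim~\ref{claim:convX0}, and the process holding one would be forced to fire within $C$ rounds once its counter drains, a contradiction; by pigeonhole over the $n$ processes, some process then fires infinitely often). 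What your approach buys is independence from the token-identity/FIFO convention, which is arguably the least rigorous step in the paper's proof --- you never need to track which token is which, only that firings propagate clockwise. You also make explicit the final step the paper leaves terse: each reset $c_i\leftarrow C$ is followed by an uninterruptible monotone descent through $C, C-1,\ldots,0$ (since line 4 acts only at $c_i=0$), so every value in $[0,C]$ recurs. What the paper's route buys, by contrast, is the FIFO convention itself, which it reuses downstream --- Claim~\ref{claim:conv3} speaks of ``every token'' having arrived at a \textsf{delay} process, a statement that presupposes exactly the per-token circulation the paper's proof establishes and yours deliberately avoids; so your argument suffices for this claim but would need supplementing to support that later use.
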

\begin{proof}
The proof is by contradiction.  First, we show that at least some token moves
infinitely often.  Since $m>0$, there is a token at some process $p_i$ because
$r_i>0$ or $q_i>0$;  the case $q_i>0$ implies immediate token movement in
the next round, so we look at the other case, $r_i>0\;\wedge\;q_i=0$.    
In one round, $p_i$ either assigns $q_i\leftarrow 1$ program, 
and thus a token moves in the next round, or $p_i$ assigns $c_i\leftarrow 
c_i-1$ because $c_i>0$.  Therefore, after at most $C$ rounds, a state 
where $c_i=0\;\wedge\;r_i>0$ is reached, and the next round enqueues a 
token for movement.  The preceding argument shows that \emph{some} token
movement occurs infinitely often in any execution.  There are $m$ tokens
throughout the execution by (\ref{leg:0}), hence at least one token can be
considered to move infinitely many times.  Note that the token abstraction
is represented by $r$ and $q$ variables only:  we cannot be sure that one
token does not overtake another token.  However, it will be our convention
to model token queuing as first-in, first-out order.  Thereby we find that
if one token moves infinitely many times clockwise around the ring, all 
tokens do so as well.  Returning to the claim, suppose that some $c_i$ 
variable eventually never has some value $k\in[0,C-1]$.   But $p_i$ experiences
infinitely many tokens arriving and assigns $q_i\leftarrow 1$ infinitely 
often, so lines 3 and 4 of \textsf{delay} execute infinitely often at $p_i$, 
which is a contradiction. 
\end{proof}
\begin{claim} \label{claim:conv2} 
For any process $p_i$ running the \textsf{delay} program, eventually 
$p_i$ assigns $q_i\leftarrow 1$ at most once in any $C+1$ consecutive rounds. 
\end{claim}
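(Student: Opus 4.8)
The plan is to control the frequency of the only statement that enqueues a token at a \textsf{delay} process, namely line~4, by tracking the counter $c_i$. The key structural observations are that line~4 is the unique place where $p_i$ performs $q_i \leftarrow q_i+1$, that it is guarded by $c_i=0\,\wedge\,r_i>0$, and that firing it simultaneously executes $c_i\leftarrow C$. Moreover, because all processes execute lines~1--2 in unison, every $q$ variable is zero before lines~3--4 run in a round (this underlies Claim~\ref{claim:conv0}, which gives $q_i\le1$ invariantly), so each firing of line~4 is exactly an assignment $q_i\leftarrow1$. Thus ``$p_i$ assigns $q_i\leftarrow1$'' and ``line~4 fires at $p_i$'' denote the same event, and it suffices to bound the spacing between successive firings of line~4.

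First I would record how $c_i$ evolves between two firings of line~4. The counter is written in only two ways: line~3 decrements it by exactly one whenever $c_i>0$, and line~4 resets it to $C$; nothing ever raises $c_i$ except the reset accompanying an enqueue, and while $c_i>0$ line~4 is disabled. Hence, immediately after an enqueue in round $t_0$ the counter holds the value $C$, and in each of the following rounds line~3 deterministically steps it through $C-1, C-2, \ldots, 1, 0$, one decrement per round, with line~4 disabled throughout. Therefore $c_i$ cannot regain the value $0$ before round $t_0+C$, and the guard of line~4 cannot be re-enabled before round $t_0+C+1$; so the spacing between any two consecutive enqueues at $p_i$ is at least $C+1$ rounds.

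The claim then follows directly: within any block of $C+1$ consecutive rounds the round indices span a difference of at most $C$, so two enqueues inside one such block would be fewer than $C+1$ rounds apart, contradicting the spacing bound just derived. Since before the first enqueue the count of enqueues in any window is trivially zero, the spacing invariant is in force from the first enqueue onward, which is the sense of ``eventually'' here (and the degenerate case $C=0$ is immediate, a single round enqueuing at most once).

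The step needing the most care is the round-boundary accounting. Because lines~1--2 may deposit an arriving token into $r_i$ in the very round that lines~3--4 operate on $c_i$, I must fix the convention of evaluating $c_i$ at the start of each round and check the off-by-one carefully: descending from $C$ to $0$ consumes exactly $C$ rounds, so the earliest re-enabling of line~4 is at round $t_0+C+1$, giving spacing exactly $C+1$ in the tightest case. I should also confirm explicitly that no statement other than line~4 ever increases $c_i$, since it is precisely this that makes the descent unconditional and independent of the token-arrival pattern at $p_i$ (in contrast to Claim~\ref{claim:conv1}, which argues the complementary fact that $c_i$ does cycle through every value); this guarantees the $C+1$ spacing regardless of how tokens arrive.
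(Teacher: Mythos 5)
Your proof is correct and follows essentially the same route as the paper's: both identify line~4 as the unique enqueue event, note that it resets $c_i\leftarrow C$, and track the deterministic one-per-round countdown of $c_i$ (which the paper displays as a table) to conclude that consecutive firings are at least $C+1$ rounds apart. The only minor difference is cosmetic: the paper invokes Claim~\ref{claim:conv1} to establish that $c_i\leftarrow C$ eventually occurs, whereas you observe---correctly, and arguably more directly---that the claim is vacuous before the first firing, so the spacing bound holds from that point on.
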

\begin{proof}
Claim \ref{claim:conv1} shows that $p_i$ eventually assigns $c_i\leftarrow C$.  
In \textsf{delay}, only line 4 assigns $q_i\leftarrow 1$, and the same
step assigns $c_i\leftarrow C$.  Thus, if we number rounds $t$, $t+1$, and 
so on, the values of $c_i$ and $q_i$ variables \emph{at the end} of each 
round is shown by:
\par\begin{center}
\begin{tabular}{|r|c|c|c|c|c|c|c|c|c|}
\hline 
\textit{round} & $t$ & $t+1$ & $t+2$ & $\cdots$ & $t+(C-1)$ & $t+C$ & $t+(C+1)$ \\ \hline
$c_i$ & $C$ & $C-1$ & $C-2$ & $\cdots$ & 1 & 0 & $C$ \\ \hline
$q_i$ & 1 & 0 & 0 & $\cdots$ & 0 & 0 & 1 \\ \hline 
\end{tabular}
\end{center}\par 
The table makes the worst-case assumption that $r_i>0$ at round
$t+(C+1)$, to illustrate that through at least $C$ consecutive rounds,
$q_i$ remains zero.
\end{proof}
\begin{claim} \label{claim:conv3} 
Let $\sigma$ be a state that occurs after sufficiently many rounds so 
that every token has arrived at least once at some process running
the \textsf{delay} program.  In the (suffix) execution following $\sigma$, 
if a token arrives at $p_i$ in round $t$, then no token arrives at 
$p_i$ during rounds $t+1$ through $t+C$.   
\end{claim}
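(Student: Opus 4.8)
The plan is to translate the statement about \emph{arrivals} at $p_i$ into a statement about \emph{enqueues} at the predecessor $p_{i-1}$, and then split on whether $p_{i-1}$ runs \textsf{delay} or \textsf{relay}. By the synchronous semantics of lines~1--2, a token arrives at $p_i$ in round $t$ exactly when $p_{i-1}$ assigned $q_{i-1}\leftarrow 1$ at the end of round $t-1$ (line~2, executed by $p_i$, first clears $q_{i-1}$, and then $p_{i-1}$ may reset it to $1$). Consequently the claim is equivalent to the assertion that, in the suffix following $\sigma$, process $p_{i-1}$ enqueues a token at most once in any $C+1$ consecutive rounds, namely in the window of rounds $t-1$ through $t+C-1$.

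If $p_{i-1}$ runs \textsf{delay}, this is immediate from Claim~\ref{claim:conv2}, since $\sigma$ is taken late enough that each \textsf{delay} process already enqueues at most once per $C+1$ consecutive rounds.

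The case of $p_{i-1}$ a \textsf{relay} process is the crux, because a relay enqueues whenever it holds a resting token after lines~1--2, so a relay carrying a backlog ($r_{i-1}>1$) could enqueue in consecutive rounds. I would therefore argue that relay backlog drains. By the corollary of Claim~\ref{claim:conv0}, at most one token arrives at any process per round, so $r_{i-1}$ is non-increasing and strictly decreases in every round with no arrival; it thus suffices to produce infinitely many no-arrival rounds at $p_{i-1}$. These I obtain by an induction along the ring: since at least one \textsf{delay} process exists, the ring splits into maximal \textsf{relay} chains, each immediately preceded (counterclockwise) by a \textsf{delay} process. That \textsf{delay} process enqueues with gaps of at least $C+1$ rounds (Claim~\ref{claim:conv2}), so the first relay of the following chain receives arrivals with such gaps, hence has no-arrival rounds and drains; once its backlog is gone, a relay enqueues in round $s+1$ exactly when it received in round $s+1$, i.e.\ exactly when its own predecessor enqueued in round $s$, so its set of enqueue rounds is that of its predecessor shifted by one and inherits the same gaps. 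Propagating this down the finite relay chain shows that $p_{i-1}$ enqueues with gaps at least $C+1$, which is what the reduction requires.

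The hypothesis on $\sigma$ enters exactly here: $\sigma$ is chosen after every token has passed through a \textsf{delay} process and after the finitely many draining steps of the induction have completed, so that from $\sigma$ onward every relay is backlog-free and the shift-by-one propagation is exact. I expect the relay-draining step to be the main obstacle; the reduction to enqueues and the \textsf{delay} case are routine.
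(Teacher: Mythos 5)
Your proposal is correct and, in substance, follows the same route as the paper: the paper's proof likewise splits into the \textsf{delay} case (a release has precondition $c_i=0$ and postcondition $c_i=C$, so releases are at least $C+1$ rounds apart --- your invocation of Claim~\ref{claim:conv2} is the same fact) and the \textsf{relay} case, which the paper dispatches with the one-line remark that ``a simple inductive argument shows that $r_j=0$ holds throughout the execution following $\sigma$.'' Your drain-and-shift induction along maximal relay chains (backlogs are non-increasing since at most one token arrives per round, strictly decrease in no-arrival rounds, and once a relay is empty its enqueue rounds are its predecessor's shifted by one) is precisely the argument the paper leaves implicit, so your write-up is in fact more complete on the crux step. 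The one point where you diverge is the treatment of the hypothesis on $\sigma$: the paper takes the stated property --- every token has passed through some \textsf{delay} process by $\sigma$ --- as itself implying that relays carry no resting tokens from $\sigma$ onward (a token resting at a relay at $\sigma$ would have had to visit a \textsf{delay} process \emph{before} arriving there, and the counting forced by $m(C+1)\leq n$ rules out a relay queue staying busy that long), whereas you instead strengthen the choice of $\sigma$ to ``after the finitely many draining steps have completed.'' Strictly read, this proves the conclusion only for a possibly later state than the claim's wording promises; since every downstream use (Claims~\ref{claim:conv4}--\ref{claim:conv6} and Lemma~\ref{lemma:converge0}) needs only the \emph{existence} of such a suffix, your variant is harmless for the paper's development, but if you want the claim exactly as stated you should close the loop by arguing, as the paper implicitly does, that relay backlog at $\sigma$ is incompatible with every token having already visited a \textsf{delay} process.
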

\begin{proof}
After $\sigma$, a token departs from any given \textsf{delay} process
$p_i$ only at line 4 of \textsf{delay}, which has precondition $c_i=0$ 
and postcondition $c_i=C$.  Thus, each \textsf{delay} process releases
a token once every $C+1$ rounds (or less often, if $r_i=0$ holds).   
A \textsf{relay} process $p_j$ could potentially release tokens once 
per round, if $r_j$ remains positive, however we have supposed that 
each token has entered some \textsf{delay} process before $\sigma$.  
A simple inductive argument shows that $r_j=0$ holds throughout the 
execution following $\sigma$.  Therefore, each process experiences 
token arrival at most once every $C+1$ rounds.  
\end{proof}

Below we consider only executions 
that start with a state $\sigma$ satisfying (\ref{claim:conv3}).  For such 
executions, a corollary of Claim \ref{claim:conv3} is:  the token 
arrival rate to any process is at most $1/(C+1)$.    

\begin{claim} \label{claim:conv4}
Let $\sigma$ be a state in any execution identified by Claim \ref{claim:conv3}.
Then, throughout the remainder of the execution, 
\begin{eqnarray}
&& ( \; \forall \, i:: \quad r_i\leq 1+r_i^{\sigma} \;)  \label{cla:1}
\end{eqnarray}
\end{claim}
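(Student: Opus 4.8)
The plan is to reduce the statement to the \textsf{delay} processes and then track, for each such process, how the resting count $r_i$ can grow. For a \textsf{relay} process $p_j$, the proof of Claim \ref{claim:conv3} already establishes that $r_j=0$ holds throughout the execution following $\sigma$, so (\ref{cla:1}) holds trivially there; I would dispose of this case in one sentence. For a \textsf{delay} process, I would write $r_i$ at the end of any round after $\sigma$ as $r_i^\sigma$ plus the net quantity (tokens that have arrived) minus (tokens that have departed via line 4), and define the \emph{surplus} to be this net quantity. Since $r_i\ge 0$ always and arrivals and departures each change the surplus by at most one per round, it suffices to show the surplus never exceeds $1$.

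The key step is to combine two facts about a \textsf{delay} process. First, by the corollary of Claim \ref{claim:conv3}, consecutive arrivals at $p_i$ are separated by at least $C+1$ rounds. Second, because line 3 decrements $c_i$ whenever $c_i>0$ and the only way to leave $0$ is the reset to $C$ in line 4, the counter $c_i$ must attain the value $0$ at least once in every window of $C+1$ consecutive rounds; and whenever $c_i=0$ with $r_i>0$, a token departs in that round. I would then argue that any arrival that raises the surplus (one occurring while $c_i>0$, so that no token departs in the same round) is \emph{cancelled} by a later departure before the next arrival: after such an arrival $c_i$ counts down monotonically to $0$ within at most $C$ rounds, during which there are no further arrivals (separation $\ge C+1$) and no departures (since $c_i>0$), so $r_i$ stays positive; hence the departure at the round where $c_i$ hits $0$ is guaranteed, and it occurs strictly before the next arrival.

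From this cancellation I would conclude, by looking at the surplus measured just before each arrival, that this quantity is non-increasing along the sequence of arrivals and starts at most $0$ (it is $0$ at $\sigma$ and can only be decreased by departures before the first arrival). An arrival raises the surplus by at most one, and between arrivals the surplus can only decrease, so the surplus never exceeds $1$, giving $r_i\le r_i^\sigma+1$. The main obstacle is the bookkeeping that guarantees the cancelling departure actually fires: one must check that $r_i$ remains positive from the surplus-raising arrival until $c_i$ reaches $0$, and one must treat the boundary case where an arrival and a departure occur in the same round (which leaves the surplus unchanged and therefore causes no trouble). Getting the window alignment right---so that the cancelling departure provably precedes the next arrival---is where the $C+1$ separation bound of Claim \ref{claim:conv3} is essential.
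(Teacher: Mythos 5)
Your proposal is correct and takes essentially the same route as the paper's proof: the paper argues in one sentence that whenever $r_i$ increases, the $C+1$ arrival-free rounds guaranteed by Claim \ref{claim:conv3} suffice for $p_i$ to release a token and restore $r_i$, which is exactly your cancellation argument. Your surplus bookkeeping, the explicit check that $r_i$ stays positive until $c_i$ reaches zero, and the disposal of the \textsf{relay} case via $r_j=0$ are just careful elaborations of what the paper compresses.
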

\begin{proof} If some $r_i$ increases in a round, 
then because of Claim \ref{claim:conv3}, no additional token arrives to $p_i$
for $C+1$ rounds; this is an adequate number of rounds to ensure 
that $p_i$ will release a token, thus putting $r_i$ back to its original value.
\end{proof}
Claim \ref{claim:conv4} allows us to introduce the notion of a 
\emph{resting bound} for any process $p_i$.  With respect an execution
$E$ with an initial state $\sigma$ as defined in Claim \ref{claim:conv3}, 
for each $p_i$ executing the \textsf{delay} program there is 
a bound $1+r_i^{\sigma}$ on the number
of tokens that may rest together at $p_i$ during the remainder of $E$.
After any round in $E$ producing a state $\beta$, 
let us consider three possibilities for any particular 
\textsf{delay} process $p_i$'s number of resting tokens:    
\[ r_i^{\beta}=(1+r_i^{\sigma})\quad\vee\quad 
   r_i^{\beta}=r_i^{\sigma}\quad\vee\quad
   r_i^{\beta}<r_i^{\sigma} \]
Notice that for the first two disjuncts, the resting bound of $r_i$ is 
unchanged.  However, for the third disjunct, where $p_i$ released a token
in the round and has fewer than $r_i^{\sigma}$ resting tokens, the argument
of Claim \ref{claim:conv4} can be applied to state $\beta$, lowering the
resting bound for $p_i$ to $1+r_i^{\beta}$.  More generally, 
there might be other processes that lower their resting
bounds during the round obtaining $\beta$.  Thus, the resting
bound developed by Claim \ref{claim:conv4}  
for each process may improve during an execution.  At any particular
point in $E$, the best bound for $r_i$ is $1+r_i^{\delta}$, 
where $\delta$ is determined by the most recent round that lowered
$r_i$'s resting bound; if there is no such preceding round, then 
let $\delta=\sigma$.  Below, we find special cases with more accurate bounds.
\par
Resting bounds are the basis for a variant function on 
protocol execution.  Let $F$ be a tuple formed by listing the 
resting bounds for all \textsf{delay} processes.  
Since no process increases its resting bound in any round, it follows that
valuations of $F$ can only decrease during execution.  If all components 
of tuple $F$ are zero, then it is possible to show that the protocol has reached
a legitimate state.  We say that $F$ is positive if any of its components is 
nonzero.  In order to prove convergence, two more claims are
needed.  First, a special case is required for a resting bound of zero 
(since Claim \ref{claim:conv4}'s form is inappropriate);  second, it 
must be shown that $F$ eventually does decrease if it is positive.
\begin{claim} \label{claim:conv5}
Let $E$ be an execution originating from a state
$\sigma$ identified by Claim \ref{claim:conv3},
and suppose $\alpha$ is a state in $E$ where 
$p_i$ running \textsf{delay} satisfies 
$r_i^{\alpha}=0\;\wedge\;c_i^{\alpha}=0$.  
Then, for the execution following $\alpha$, the
resting bound of $r_i$ is zero.
\end{claim}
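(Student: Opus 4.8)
The plan is to trace the evolution of $p_i$'s variables forward from the state $\alpha$ and show that $r_i=0$ holds at every round boundary in the suffix, which is exactly the assertion that the resting bound of $r_i$ is zero. The engine of the argument is the tight match between the period of the counter $c_i$ and the minimum inter-arrival gap guaranteed by Claim \ref{claim:conv3}: the counter, once set, takes exactly $C+1$ rounds to return to $0$, and arrivals are spaced at least $C+1$ rounds apart, so every arriving token meets $c_i=0$ and is released in the same round, never resting.

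First I would record the one-round behavior of \textsf{delay} starting from a boundary satisfying $c_i=0 \wedge r_i=0$. If no token arrives, then lines 3 and 4 are both disabled (the guard of line 3 needs $c_i>0$, and the guard of line 4 needs $r_i>0$), so the round again ends with $c_i=0 \wedge r_i=0$. If a token arrives, line 1 momentarily raises $r_i$ to $1$, but then line 4 fires (its guard $c_i=0 \wedge r_i>0$ now holds), releasing the token: the round ends with $r_i=0$, $q_i=1$, and $c_i=C$. In both cases $r_i=0$ at the boundary.

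Next I would follow the counter after such a release in some round $t$. Once $c_i=C$, line 3 decrements it on each subsequent round, so $c_i$ passes through $C-1, C-2, \ldots, 1, 0$ and first returns to $0$ at the end of round $t+C$. By Claim \ref{claim:conv3}, an arrival in round $t$ forbids any arrival during rounds $t+1$ through $t+C$; hence $q_{i-1}=0$ throughout that window, line 1 never raises $r_i$, and $r_i$ stays $0$ at each of these boundaries. The earliest a new token can arrive is round $t+C+1$, which is precisely when $c_i$ has cycled back to $0$, so the new token again meets $c_i=0$ and is released immediately.

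These two observations combine into a straightforward induction over the round boundaries following $\alpha$: each boundary either satisfies $c_i=0 \wedge r_i=0$ or sits within a descent $c_i \in \{C, C-1, \ldots, 1\}$ during which no token arrives, and in every case $r_i=0$. The only real obstacle is the synchronization step showing that an arrival can never catch $c_i$ in the interval $(0,C]$; this is exactly where the equality between the counter period $C+1$ and the inter-arrival lower bound $C+1$ from Claim \ref{claim:conv3} is indispensable, since it guarantees that a fresh token is always released on arrival rather than accumulating as a resting token. With that established, $r_i=0$ at every boundary after $\alpha$, so the resting bound of $r_i$ is zero.
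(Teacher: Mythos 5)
Your proof is correct and follows essentially the same route as the paper's: an induction over the execution after $\alpha$, keyed to token arrivals at $p_i$, showing each arrival meets $c_i=0$ (and is thus released in the same round) because the counter period $C+1$ exactly matches the inter-arrival lower bound of Claim \ref{claim:conv3}. Your version merely spells out the per-round case analysis that the paper's proof leaves implicit.
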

\begin{proof} 
The proof is by induction over the execution following
$\alpha$, based on the sequence of rounds associated with
token arrival at $p_i$.
When a token arrives at $p_i$ after state $\alpha$, the 
predicate $r_i=0\;\wedge\;c_i=0$ holds.  The \textsf{delay}
program establishes $q_i=1\;\wedge\;r_i=0\;\wedge\;c_i=C$ when
processing the arriving token.  Claim \ref{claim:conv3} ensures
that no additional token will arrive to $p_i$ during the following
$C$ rounds, so that $c_i=0$ holds when the next token arrival 
occurs for $p_i$.  
\end{proof}
\begin{claim} \label{claim:conv6}
Let $E$ be an execution originating from a state
$\sigma$ identified by Claim \ref{claim:conv3}.
$F$ cannot be positive and constant throughout $E$.
\end{claim}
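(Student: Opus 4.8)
The plan is to argue by contradiction: assume $F$ is positive and constant throughout $E$, and show that some component of $F$ must after all decrease. The first step is to translate the hypothesis into a rigid statement about the dynamics of an individual \textsf{delay} process. Since $F$ is positive, there is a \textsf{delay} process $p_i$ whose resting bound is positive; since $F$ is constant, this bound is never lowered, so by the contrapositive of Claim~\ref{claim:conv5} the process $p_i$ never occupies a configuration with $r_i=0\wedge c_i=0$. Hence whenever $c_i$ reaches $0$ we must have $r_i>0$, so line~4 of \textsf{delay} fires and resets $c_i\leftarrow C$; because $c_i$ strictly decrements on every non-firing round (line~3) and, by Claim~\ref{claim:conv1}, reaches $0$ infinitely often, $p_i$ releases exactly one token every $C+1$ rounds, forever --- the maximum rate permitted by Claim~\ref{claim:conv3}. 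Call such a process \emph{saturated}.

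The second step is a flow-conservation argument on the closed ring. Fix a long window of $T$ rounds. By Claim~\ref{claim:convX0} the $m$ tokens are conserved, and by Claim~\ref{claim:conv1} each token circulates clockwise infinitely often, so each completes many full revolutions within the window. The number of tokens emitted by a \textsf{delay} process $p_j$ over the window equals the number of token-passages through $p_j$, which equals the total number of revolutions completed in the window (each revolution passes every \textsf{delay} process exactly once); hence, up to a bounded boundary term, all \textsf{delay} processes share a common emission rate $\lambda$. The existence of a saturated process forces $\lambda = 1/(C+1)$, so the window contains $\approx T/(C+1)$ revolutions, giving an average per-token revolution time of $m(C+1)$ rounds.

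The third step closes the argument using the feasibility bound (\ref{eqn:separation}). A single revolution consists of $n$ forward moves plus however many rounds a token spends resting, so every revolution takes at least $n$ rounds, and therefore the average revolution time is at least $n$. Combining this with the computed average $m(C+1)$ and with $m(C+1)\leq n$ from (\ref{eqn:separation}) gives $n\leq m(C+1)\leq n$, so $m(C+1)=n$ and every revolution takes exactly $n$ rounds; that is, no token ever rests, and $\sum_i r_i=0$ holds throughout $E$. But then, taking the saturated process $p_i$ and the next round in which $c_i$ reaches $0$, the start-of-round configuration satisfies $r_i=0\wedge c_i=0$, so Claim~\ref{claim:conv5} forces the resting bound of $p_i$ to $0$, strictly decreasing $F$ and contradicting constancy.

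I expect the main obstacle to be making the flow-conservation step rigorous: the equality of the emission rates and the identity between revolution count and emission count hold only in a limiting sense, so one must control the bounded transient terms --- tokens in transit between consecutive \textsf{delay} processes, and the partial revolutions at the two window endpoints --- tightly enough that the averaged revolution time $m(C+1)$ is \emph{exact} rather than merely asymptotic. Exactness is what lets the pointwise inequality ``each revolution $\geq n$'' combine with an average equal to $n$ to force every revolution to equal $n$; securing this is the delicate part, and it is precisely where the boundedness supplied by Claims~\ref{claim:conv3} and~\ref{claim:conv4} must be invoked with care.
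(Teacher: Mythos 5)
Your proposal is correct in substance and takes a genuinely different route from the paper's proof. The paper argues locally and spatially: it catalogs the arrival-spacing scenarios compatible with a constant $F$ (slack $t_j$ accumulating to at most $k$), then invokes $m(C+1)\leq n$ to exhibit, at every state, a token-free ring segment longer than $C+1$, so that some \textsf{delay} process sees inter-arrival gaps exceeding $C+1$ infinitely often; its counter drains to zero, it releases without replacement, $r_i$ drops, and $F$ decreases. You argue globally and temporally: constancy plus the contrapositive of Claim~\ref{claim:conv5} yields a \emph{saturated} process releasing exactly every $C+1$ rounds, flow conservation pins the circulation rate, and combining the per-revolution lower bound $n$ with (\ref{eqn:separation}) collapses everything to the extremal case $m(C+1)=n$ with zero-wait circulation, where Claim~\ref{claim:conv5} itself delivers the contradiction. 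What your route buys: the strict case $m(C+1)<n$ dies by a one-line rate comparison, you extract the structural fact that constancy would force $n=m(C+1)$, and you promote Claim~\ref{claim:conv5} from the paper's parenthetical aside to the engine of the argument. What the paper's route buys: no bookkeeping of windowed rates and boundary terms.

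Two repairs, both available within your own framework. First, the exactness you flag as the main obstacle is easier than you fear: you do not need windowed averages at all. Because the saturated $p_i$ releases exactly every $C+1$ rounds and tokens pass in FIFO order (the convention adopted in the proof of Claim~\ref{claim:conv1}), consecutive releases of the \emph{same} token by $p_i$ are exactly $m$ releases, hence exactly $m(C+1)$ rounds, apart --- a per-cycle identity, not an asymptotic rate. Writing this duration as $n$ moves plus $w\geq 0$ waiting rounds gives $n+w=m(C+1)\leq n$, so $w=0$ for every complete cycle, exactly. Second, your assertion that $\sum_i r_i=0$ holds \emph{throughout} $E$ overreaches: an asymptotic average of $n$ with each revolution at least $n$ forces only all but a vanishing fraction of revolutions to be exact, and initial transients may well rest tokens (a positive $F$ is entirely consistent with resting tokens at $\sigma$, since the resting bound $1+r_i^{\sigma}$ is positive even when $r_i^{\sigma}=0$). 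This overstatement is harmless, though, because a single zero-wait revolution suffices: at its return, FIFO forbids releasing the arriving token if another rests at $p_i$, so the release round must begin with $r_i=0\wedge c_i=0$, precisely the Claim~\ref{claim:conv5} state, which drops the bound to zero and contradicts constancy of $F$.
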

\begin{proof}
Proof by contradiction.  Suppose, for each 
\textsf{delay} $p_i$, that the resting bound never
decreases.  We analyze scenarios for this supposition
and derive necessary conditions, which are used to 
show a contradiction.  Claim \ref{claim:conv1} implies that
$p_i$ receives a token infinitely often in $E$ and
releases a token infinitely many times.  If 
each token reception coincides with releasing a 
token, which entails $c_i=0$, then $r_i>0$ would 
remain constant throughout $E$.  For such a 
continuing scenario, $p_i$ must receive at token 
once every $C+1$ rounds.  The other possible 
scenario is that of $r_i$ incrementing to the 
resting bound, then decrementing, and repeating this
pattern.  For this scenario, $c_i$ decrements to 
zero, then resets to $C$, continuously during $E$.  
Because we have supposed that $r_i$ never decrements
twice before incrementing again (otherwise $F$ would
decrease), tokens need to arrive sufficiently often
to $p_i$.  Suppose $c_i=k>0$ when a token arrives.  
Then another token must arrive after exactly $C+1$
rounds so that $c_i=k$ upon token arrival.  If instead,
a token arrives after $C+1+d$ rounds, then $c_i=k-d$ 
would hold upon token arrival, $d\leq k$.  More generally,
the spacing in token arrivals over $E$ could be $C+1+t_1$, then 
$C+1+t_2$, and so on up to a delay gap of $C+1+t_{\ell}$ rounds, 
so long as $\sum_{j=1}^{\ell} t_j \leq k$.  Each time there
is a delay gap of $C+1+t_j$ rounds with $t_j>0$, the 
counter $c_i$ decreases in this scenario.  A decrease resulting
in $c_i=0$ would then force all future delays to be $C+1$, so
that tokens arrive exactly as they are released, preventing a
reduction of $r_i$.     
\par
Having exposed the scenarios for $F$ remaining constant
over execution $E$, thus at least one $r_i>0$ throughout 
$E$, we observe that from $m(C+1)\leq n$ there exists 
a segment of the ring, with more than $C+1$ processes, 
containing no token, at every state in $E$.  The existence
of such a segment implies that some \textsf{delay} $p_i$ 
will receive a token after a delay of more than 
$C+1$ rounds (a more detailed argument could take into
account processes identified by Claim \ref{claim:conv5}, 
which merely pass along tokens when they arrive).  
Thus, infinitely often, a delay between token arrivals 
is at least $C+2$ rounds.  It follows that eventually,
$c_i$ decreases to zero for some $p_i$ before a token
arrives, at which point it will decrease $r_i$, 
contradicting the assumption that $F$ remains constant.
\end{proof}
\begin{lemma}[Convergence] \label{lemma:converge0}
Every execution of the protocol eventually contains a legitimate state.
\end{lemma}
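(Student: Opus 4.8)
The plan is to treat the tuple $F$ of resting bounds as a variant function and assemble the preceding claims into a well-founded descent argument, then verify that the bottom of this descent is a legitimate state. First I would reduce to a convenient suffix: by Claims~\ref{claim:conv0}--\ref{claim:conv3}, every execution has a suffix beginning at a state $\sigma$ of the kind identified in Claim~\ref{claim:conv3}, in which $q_i\leq 1$ holds everywhere, every relay process already satisfies $r_j=0$, and token arrivals at any process are spaced at least $C+1$ rounds apart. Since the lemma only asserts that a legitimate state is eventually reached, it suffices to argue entirely within this suffix.

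Next I would set up the variant. Let $F$ list, for every \textsf{delay} process, its current resting bound --- either $1+r_i^{\delta}$ as in Claim~\ref{claim:conv4}, or the value $0$ supplied by Claim~\ref{claim:conv5} once $r_i=0\;\wedge\;c_i=0$ has occurred. The discussion following Claim~\ref{claim:conv4} shows $F$ never increases, and Claim~\ref{claim:conv6} shows $F$ cannot remain positive and constant. Applying Claim~\ref{claim:conv6} to each suffix, whenever $F$ is positive it must strictly decrease within finitely many rounds; as $F$ ranges over a well-founded set (tuples of nonnegative integers of fixed length) and is non-increasing, it reaches the all-zeros tuple after finitely many rounds.

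Finally I would show the all-zeros tuple forces legitimacy. When every resting bound is $0$, Claim~\ref{claim:conv5} gives $r_i=0$ permanently for every \textsf{delay} process; with $r_j=0$ for relay processes and Claim~\ref{claim:convX0} we obtain $\sum_i r_i=0$, $\sum_i q_i=m$, and $q_i\leq 1$, establishing (\ref{leg:0}). Each token then advances exactly one link per round, so the temporal spacing of arrivals ($\geq C+1$ by Claim~\ref{claim:conv3}) equals the spatial gap between consecutive tokens, giving $tokdist>C$ and (\ref{leg:1}). The counter relations follow by tracking one \textsf{delay} process over a firing cycle: line 4 sets $q_i=1\;\wedge\;c_i=C$, yielding (\ref{leg:4}); thereafter $c_i$ decrements by one each round while the released token advances one link, so $Rdist_i=C-c_i$, yielding (\ref{leg:2}); and the token approaching from behind cannot arrive before $c_i$ reaches $0$, so $Ldist_i>c_i$, yielding (\ref{leg:3}).

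I expect the \emph{main obstacle} to be this last step --- verifying that the all-zeros valuation of $F$ genuinely satisfies the full legitimacy predicate rather than merely describing a plausible steady state. The delicate parts are confirming that a token released by $p_i$ is never delayed at an intermediate process (so $Rdist_i=C-c_i$ holds exactly), which relies on every process passing tokens immediately once $r=0$ holds throughout, and pinning down the off-by-one in (\ref{leg:2})--(\ref{leg:3}) together with the strictness of the inequality in (\ref{leg:3}). A secondary subtlety, latent in the setup, is that a resting bound can actually attain $0$: this is precisely why Claim~\ref{claim:conv5} is needed, since the form of Claim~\ref{claim:conv4} alone only drives a bound down to $1$, and the reduction in Claim~\ref{claim:conv6} (where $c_i$ reaches $0$ before a token arrives) is what pushes each process into the regime where Claim~\ref{claim:conv5} applies.
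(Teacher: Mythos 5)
Your proposal is correct and follows essentially the same route as the paper: a well-founded descent on the resting-bound tuple $F$ driven by Claim~\ref{claim:conv6} (with Claim~\ref{claim:conv5} supplying the bound-zero case that Claim~\ref{claim:conv4} cannot), followed by a direct check that the all-zeros valuation satisfies (\ref{leg:0})--(\ref{leg:4}). The paper compresses this into three sentences --- ``induction on Claim~\ref{claim:conv6} so long as $F$ is positive,'' then verifying a \textsf{delay} process over the $C+1$ rounds preceding token arrival --- and your expansion, including the flagged off-by-one and strictness checks in (\ref{leg:2})--(\ref{leg:3}), is a faithful elaboration of exactly that argument.
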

\begin{proof}
The proof is by induction on Claim \ref{claim:conv6}, so long as $F$ is 
positive.   Therefore, the resting bound for every 
\textsf{delay} process is zero eventually.  
To establish that the resulting state
is legitimate, it is enough to verify the behavior of a \textsf{delay}
process $p_i$ during the $C+1$ rounds preceding token arrival to 
see that (\ref{leg:1})-(\ref{leg:3}) hold with respect to $p_i$. 
\end{proof}

We sketch an argument bounding 
the worst-case convergence time using elements from the 
proof of convergence.  The variant function $F$ is applied to an 
execution suffix satisfying (\ref{cla:0}), and Claim \ref{claim:conv3};
such a suffix occurs within $O(n)$ rounds of any execution: 
the worst case occurs when one \textsf{delay} process $p_i$ holds all $m$ 
tokens, which it releases after at most $m\cdot C$ rounds, and the last
of these tokens takes $O(n)$ rounds to again arrive at a \textsf{delay}
process;  since $m\cdot C\leq n$ by (\ref{eqn:separation}), 
we have $O(n)$ rounds overall to obtain the suffix for Claim \ref{claim:conv3}.  
To bound the worst case for $F$ reducing in an execution, we observe
that there are at most $n$ components to $F$, each with an initial maximum
value of $m$.  Suppose each component decreases sequentially, therefore 
requiring $n\cdot m\cdot f$ time, where $f$ is the worst-case number of rounds
to reduce one component of $F$.  We bound $f$ by the proof argument of 
Claim \ref{claim:conv6}.  A ring segment of length at least $C+2$ and 
devoid of tokens implies some decrease of a resting bound in the proof 
argument.  This decrease may take $O(C)$ time to occur, as a $c$ variable
reduces while a process awaits a token.  If $f\in O(C)$, an 
overall bound on convergence time is $O(n)+O(n\cdot m \cdot C) = O(n^2)$;  
however, if the ring segment without tokens is longer, then a \textsf{delay}
process may spend more time awaiting a token.  A conservative bound is 
therefore $O(m\cdot n^2)=O(n^3)$ rounds.  

As an aside, we note that the algorithms are deterministic, execution is 
fully synchronous (there is no nondeterministic adversary), and the 
program model fits the Petri net formalism; therefore a formulation
using the max-plus algebra \cite{FCOQ92} can express system 
execution, and there exist tools to compute eigenvalues for a matrix
representing the system.  We did not investigate such an approach, since
the choice of which processes run \textsf{delay} would be an extra 
complication.  

\begin{figure}[htb]
\quad
\begin{minipage}{0.45\columnwidth}
\input{graph1.tex}
\end{minipage}
\qquad
\begin{minipage}{0.45\columnwidth}
\input{graph2.tex}
\end{minipage}
\begin{quote}
{\footnotesize
Each point is derived from 300 simulations with a random initial
state and random selection of which processes run \textsf{delay}, ranging
from one to 50 on the $x$-axis;  the $y$-axis is the average number of rounds 
for convergence. 
}
\end{quote}
\caption{Simulations with $n=50$, $m=2$ and $m=5$}
\label{figure:simulation1}
\end{figure}
\begin{figure}[htb]
\quad
\begin{minipage}{0.45\columnwidth}
\input{graph3.tex}
\end{minipage}
\qquad
\begin{minipage}{0.45\columnwidth}
\input{graph4.tex}
\end{minipage}
\begin{quote}
{\footnotesize
The graph on the left continues the type of simulation shown in 
Figure \ref{figure:simulation1}, however with 10 tokens.  
The graph on the right varies the ring size, $n=10i$ for 
$0<i<20$, in each case letting $m=2$, $d=\lfloor n/3\rfloor$,
and the number of \textsf{delay} processes be $n/2$.
}
\end{quote}
\caption{Other simulations}
\label{figure:simulation2}
\end{figure}
We have simulated the protocol for various cases of $n$, $m$, 
$C$, and choices for the number of \textsf{delay} processes.  These 
simulations suggest that a bound $O(n^3)$ may be loose for the average 
case.  Another point of the simulation is to investigate the influence 
of having multiple \textsf{delay} processes on convergence time.  
Our simulations explored random initial values for variables. Three graphs,
spread over figures \ref{figure:simulation1} and \ref{figure:simulation2}, 
experiment with differing values of $m$ and $d$, all for a 50-node ring.
In each graph, an experiment is repeated for different numbers of 
\textsf{delay} processes, from 1 to 50.  The results suggest that 
having at least a few \textsf{delay} processes is beneficial.  To 
explore another dimension, the ring size $n$, a fourth graph presented
in Figure \ref{figure:simulation2} varies $n$:  the results suggest
that expected convergence time is linear in $n$. 
\begin{theorem} \label{theorem:selfstab}
The \textsf{delay/relay} protocol, with at least one \textsf{delay}
process, and with $n>1$, $m>1$, $d=C+1$, and $m\cdot d\leq n$, self-stabilizes
to desiderata \desd{1}-\desd{5}.
\end{theorem}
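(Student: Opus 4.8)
The plan is to assemble the theorem from the two lemmas already proved and then read each desideratum off the legitimate-state predicate. First I would combine Lemma~\ref{lemma:converge0} (Convergence) with Lemma~\ref{lemma:closure0} (Closure): convergence guarantees that every execution reaches some legitimate state $\sigma$, and closure guarantees that every state obtained from a legitimate state by executing one round is again legitimate. Hence every execution has a suffix all of whose states are legitimate, which is exactly the structure demanded by the definition of self-stabilization. It then remains only to verify that an arbitrary legitimate state, together with its (legitimate) successors, satisfies \desd{1}--\desd{4}; property \desd{5} is immediate, since it asks precisely for recovery from an arbitrary post-fault initial state, and that is what convergence provides.

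The static desiderata follow directly. Property \desd{1} is Claim~\ref{claim:convX0}, namely that $\sum_i(r_i+q_i)=m$ holds invariantly, so $m$ tokens are present at all times. Property \desd{2} follows from conjunct~(\ref{leg:1}): a legitimate state has $tokdist>C$, and since $d=C+1$ the minimum clockwise separation between consecutive tokens is at least $C+1=d$; because closure makes every state of the suffix legitimate, this separation is maintained throughout. Here the hypothesis $m\cdot d\le n$ is the feasibility condition (\ref{eqn:separation}) that lets $m$ tokens coexist on the ring while each claims $d$ positions of clearance.

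The behavioral desiderata \desd{3} and \desd{4} require the dynamics. For \desd{3}, I would observe that conjunct~(\ref{leg:0}) forces $\sum_i r_i=0$ and $q_i\le 1$ in a legitimate state, so every token is queued; the proof of Lemma~\ref{lemma:closure0} already shows that from a legitimate state each queued token is moved to its successor and then immediately re-enqueued, the counter invariants~(\ref{leg:2})--(\ref{leg:4}) guaranteeing that a token arriving at a \textsf{delay} process finds $c_i=0$ and is not stalled. Thus every token advances exactly one position clockwise per round. Property \desd{4} then follows by a rigid-rotation argument: since all $m$ tokens advance one step clockwise each round and their separations are preserved, the vector of token positions is merely rotated by one per round, so over any window of $n$ consecutive rounds each token visits each process exactly once. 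Consequently every process holds a token exactly $m$ times per $n$ rounds, giving the visit rate $m/n$ required by \desd{4}.

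The main obstacle is the behavioral claim underlying \desd{3}: confirming that in the legitimate suffix no token ever waits at a \textsf{delay} process, so that motion is genuinely lock-step. This rests entirely on the counter-alignment invariants~(\ref{leg:2})--(\ref{leg:3}), which encode $c_i=C-Rdist_i$ for an outgoing token and $Ldist_i>c_i$ for the next incoming one; I would lean on the closure proof, which establishes that these alignments are preserved, to conclude that each \textsf{delay} counter reaches $0$ exactly as the next token arrives. Once collision-free lock-step motion is secured, the uniform-visitation count for \desd{4} is a routine counting argument and the remaining desiderata are immediate.
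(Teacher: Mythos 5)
Your proposal is correct and takes essentially the same route as the paper's own proof: it assembles self-stabilization (\desd{5}) from Lemma~\ref{lemma:converge0} plus Lemma~\ref{lemma:closure0}, cites Claim~\ref{claim:convX0} for \desd{1}, and reads \desd{2}--\desd{4} off the legitimate-state predicate, in particular using the alignment invariants (\ref{leg:2})--(\ref{leg:3}) to conclude that $c_i=0$ at every token arrival, so a \textsf{delay} process behaves like \textsf{relay} and tokens move every round. Your rigid-rotation counting for \desd{4} merely spells out what the paper compresses into the remark that the ring topology and unidirectional token movement give equal visitation.
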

\begin{proof}
Claim \ref{claim:convX0} attends to \desd{1}, 
The definition of legitimate state validates \desd{2}.  A property of 
a legitimate state is that $c_i=0$ whenever a token arrives to $p_i$, hence
the behavior of \textsf{delay} is like \textsf{relay}:  a token moves
in each round, as required by \desd{3}.  The ring topology and the unidirectional
movement of tokens satisfies \desd{4}.  Finally,   
lemmas \ref{lemma:converge0} and \ref{lemma:closure0} provide the 
technical basis for the theorem, showing \desd{5}.
\end{proof}

\section{Protocol with Unknown Ring Size}
\label{section:unknown-sep}

A parameter $C$, upon which the target separation between
tokens is based, is given to the protocol of Section
\ref{section:known-sep}.  Here we consider another design
alternative, where the separation between tokens should be
maximized, but the ring size is unknown.  The technique is
straightforward:  building upon the \textsf{delay} program,
additional variables are added to count the number of rounds
needed to circulate a token, that is, the new program
calculates $n$.  Two extra assumptions are used for the
new protocol: the value of $m$ is known and the number of
processes running the \textsf{delay} program is exactly one.
We discuss this limitation in Section \ref{section:conclusion}.

\begin{figure}[ht] \hspace*{0.15\columnwidth}
\framebox[0.7\columnwidth][c]{
\begin{minipage}{0.7\columnwidth} 
\begin{tabbing} x \= xxxx \= xx \= xx \= xx \= xx \= xx \= \kill 
\> \textsf{delay} ::
\\ \>\> \texttt{do forever} \\ 
{\scriptsize 1} \>\>\> $t_i \;\leftarrow\; t_i + 1$ ~\texttt{;} \\ 
{\scriptsize 2} \>\>\> \texttt{if} $q_{i-1}>0\;\wedge\;timing_i$ \texttt{then} \\
{\scriptsize 3} \>\>\>\> $ignore_i\;\leftarrow~ ignore_i-1$ ~\texttt{;} \\ 
{\scriptsize 4} \>\>\>\> \texttt{if} ~$ignore_i< 1$~ \texttt{then} \\ 
{\scriptsize 5} \>\>\>\>\> $timing_i \;\leftarrow~ \textit{false}$ \texttt{;} \\ 
{\scriptsize 6} \>\>\>\>\> $ClockBase_i \;\leftarrow~ \lfloor t_i/M\rfloor -1$ \\ \\ 
{\scriptsize 7} \>\>\> $r_i \;\leftarrow~ r_i + q_{i-1}$ ~\texttt{;} \\ 
{\scriptsize 8} \>\>\> $q_{i-1} \;\leftarrow~ 0$ ~\texttt{;} \\ \\ 
{\scriptsize 9} \>\>\> \texttt{if} \quad $c_i>0$ \quad \texttt{then} \\ 
{\scriptsize 10} \>\>\>\> $c_i\;\leftarrow~ c_i-1$ \\
{\scriptsize 11} \>\>\> \texttt{else if} $c_i=0 \;\wedge\; r_i>0$ \texttt{then}\hspace*{2em}\\ 
{\scriptsize 12} \>\>\>\> $c_i \;\leftarrow\; ClockBase_i$ ~\texttt{;} \\ 
{\scriptsize 13} \>\>\>\> $r_i \;\leftarrow\; r_i - 1$ ~\texttt{;} \\
{\scriptsize 14} \>\>\>\> $q_i \;\leftarrow\; q_i + 1$ ~\texttt{;} \\ 
{\scriptsize 15} \>\>\>\> \texttt{if} $\neg timing_i$ \texttt{then} \\ 
{\scriptsize 16} \>\>\>\>\> $timing_i \;\leftarrow\; \textit{true}$ ~\texttt{;} \\
{\scriptsize 17} \>\>\>\>\>  $t_i \;\leftarrow\; 0$ ~\texttt{;} \\ 
{\scriptsize 18} \>\>\>\>\>	$ignore_i \;\leftarrow\; M-r_i-1$ 
\end{tabbing} 
\end{minipage} } 
\caption{\textsf{delay}
program revised to calculate ring size} \label{fig:newdelay}
\end{figure}

Figure \ref{fig:newdelay} presents the revised \textsf{delay}
program, which introduces $timing_i$, $t_i$, $ignore_i$,
and $ClockBase_i$.  The program uses $ClockBase_i$ in
place of parameter $C$, which is periodically recalculated.
The method of calculation relies upon knowing $M$ and knowing
that all other processes run \textsf{relay}.  The program
begins a timing phase in lines 16-18, which starts a counter
$t_i$ at zero, and calculates the number of tokens that are
elsewhere in the ring, $ignore_i$.  Subsequently, lines 3-6
handle token arrival for purposes of calculating ring size;
after $ignore_i$ arriving tokens are ignored, the next token
is the one that was released when the timing phase began.
Of course, this calculation can be incorrect in early rounds
of an execution, but eventually each timing phase culminates
in $t_i$ having the ring size at line 6.

\begin{lemma} \label{lemma:selfstab} With the \textsf{delay}
program of Figure \ref{fig:newdelay} at one process
and \textsf{relay} at all other processes, the system is
self-stabilizing to $C=\lfloor n/m\rfloor - 1$.  \end{lemma}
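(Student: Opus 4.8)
The plan is to reduce this to the fixed-$C$ result already established in Theorem~\ref{theorem:selfstab} by showing that the added timing machinery eventually fixes $ClockBase_i$ at the target value $\lfloor n/m\rfloor-1$ and thereafter never changes it. Throughout I take $M=m$, the known token count, so that the assignment in line~6 yields $\lfloor t_i/M\rfloor-1=\lfloor n/m\rfloor-1$ precisely when $t_i=n$. The key structural observation is that lines~7--14 of the revised \textsf{delay} program are identical to lines~1--4 of the original \textsf{delay} program with the constant $C$ replaced by the current value of $ClockBase_i$; the timing code (lines~1--6 and 15--18) only \emph{reads} token arrivals and updates $ClockBase_i$, so token movement proceeds exactly as in Section~\ref{section:known-sep} for whatever counter bound is in force.

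First I would argue that the relay processes flush their resting tokens within $O(n)$ rounds, independently of the values taken by $ClockBase_i$. This is essentially the content of Claim~\ref{claim:conv3}: relays release a token every round, so once every token has passed the single \textsf{delay} process the relay resting counts are zero and stay zero, and that argument never uses the specific value of the counter bound, hence it survives a time-varying $ClockBase_i$. After the relays are clear, by Claim~\ref{claim:convX0} all $m$ tokens are either resting at the unique \textsf{delay} process $p_0$ or moving in the relay section; and since $q_j\leq 1$ with no resting relay tokens, the moving tokens occupy \emph{distinct} positions.

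Next I would establish measurement correctness for any timing phase that is \emph{started fresh} (at lines~15--18) after the relays are clear. Let such a phase begin in round $T_0$, where $p_0$ releases a token, sets $t_i\leftarrow 0$, and sets $ignore_i$ to the number of tokens lying ahead of the just-released token; by token conservation this count is $m-r_i-1$ (total, minus those resting at $p_0$, minus the one released). I would then verify three bookkeeping facts about the window of the next $n$ rounds: (a) the moving tokens are at distinct positions, so they arrive at $p_0$ in distinct rounds; (b) the freshly released token starts at the ``back'' (position $0$) and therefore returns strictly last, arriving exactly when $t_i=n$; and (c) no token can arrive twice, and no resting token released during the window can return inside it, since either event requires a full loop of $n$ rounds. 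Consequently the first $m-r_i-1$ arrivals are precisely the tokens that were ahead, the following arrival is the returning released token, and line~6 records $ClockBase_i=\lfloor n/m\rfloor-1$. Because the first fresh phase after relay flushing already computes this value and every later phase recomputes the same value, $ClockBase_i$ is pinned to $\lfloor n/m\rfloor-1$ from then on.

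Finally, once $ClockBase_i$ is fixed at $\lfloor n/m\rfloor-1$, the revised program is literally the Section~\ref{section:known-sep} \textsf{delay} program with $C=\lfloor n/m\rfloor-1$; since $m(C+1)=m\lfloor n/m\rfloor\leq n$ satisfies~(\ref{eqn:separation}), Theorem~\ref{theorem:selfstab} applies and the system self-stabilizes to \desd{1}--\desd{5} with separation $C+1=\lfloor n/m\rfloor$, which is the maximum permitted by~(\ref{eqn:separation}). The hard part will be the circular dependence between measurement and movement: the arrival pattern is only unambiguous once the relays are clear, yet the counter bound that governs movement is itself the quantity being measured. The resolution I would stress is that both the relay-flushing argument and the round-trip time of a released token are \emph{independent} of the counter value, so a correct measurement becomes available as soon as the relays clear; the remaining delicacy is carefully justifying facts (a)--(c) above together with the exact $ignore_i$ count, where an off-by-one in the arrival bookkeeping is the easiest mistake to make.
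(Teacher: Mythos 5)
Your overall architecture matches the paper's: a single \textsf{delay} process times the round trip of a token it releases, using $ignore_k = m - r_k - 1$ (with $M=m$) to recognize the returning token; once $ClockBase_k$ is pinned at $\lfloor n/m\rfloor - 1$, the argument reduces to the Section~\ref{section:known-sep} machinery (the paper invokes Lemmas~\ref{lemma:closure0} and~\ref{lemma:converge0} directly; your appeal to Theorem~\ref{theorem:selfstab}, with the check that $m\lfloor n/m\rfloor\leq n$ satisfies~(\ref{eqn:separation}), is the same reduction). Your bookkeeping facts (a)--(c) likewise agree with the paper's reasoning for its case (1), where no tokens rest at relays.

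The genuine gap is your first step. You claim the relays flush their resting tokens ``independently of the values taken by $ClockBase_i$,'' citing Claim~\ref{claim:conv3} and asserting its argument never uses the specific counter bound. It does: that proof rests on line~4 of \textsf{delay} having precondition $c_i=0$ and postcondition $c_i=C$, so that releases from the \textsf{delay} process are at least $C+1\geq 2$ rounds apart; this spacing is exactly what drives the inductive argument that $r_j=0$ persists at relays. In the revised program the postcondition is $c_k = ClockBase_k$, and $ClockBase_k$ can be $0$ in an arbitrary initial state (it is rewritten only at line~6, and a corrupted in-progress phase can even compute $\lfloor t_k/M\rfloor - 1 = 0$). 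With $ClockBase_k=0$ the \textsf{delay} process may release a token every round, and a relay receiving one token per round while releasing one per round keeps its resting count constant forever --- so relay flushing is precisely the thing that is \emph{not} ClockBase-independent. The paper resolves the circular dependence in the opposite order: the round-trip lower bound $t_k\geq n$ holds for any fresh phase \emph{even when relays are congested} (FIFO queuing and no overtaking mean all $m-r_k-1$ tokens ahead arrive before the released token, which needs at least $n$ rounds to return), so every completed fresh phase computes $ClockBase_k \geq \lfloor n/m\rfloor - 1 > 1$; it is this throttling that makes resting tokens at relays disappear (the paper's case (2)), and only \emph{then} does the measurement become exact with $t_k = n$. (The first fresh measurement may overshoot the target if the released token was delayed at congested relays; that is harmless, since later phases recompute.) To repair your proof you need this inversion --- measurement lower bound first, flushing as its consequence --- or else a separate dispersal argument for the $ClockBase_k=0$ regime, which Claim~\ref{claim:conv3} does not supply.
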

\begin{proof} By arguments below, in any execution,
$ClockBase_i=\lfloor n/m\rfloor -1$ holds throughout
a suffix execution.  Lemmas \ref{lemma:closure0} and
\ref{lemma:converge0} then apply to verify self-stabilization.
\par Let $p_k$ be the sole \textsf{delay} process.
For convergence, it is enough to show that $ClockBase_k$
obtains the maximum feasible value for $m$ tokens, that is,
$ClockBase_k=\lfloor n/m\rfloor -1$ holds throughout a suffix
of any execution.  The proof hinges on two cases, either
\itp{1} we have $(\forall i:\; i\neq k:\; r_i=0)$, or \itp{2}
some tokens rest at \textsf{relay} processes.  In case \itp{1},
because $p_k$ releases at most one token per round, and because
all \textsf{relay} processes pass along the token it receives
in the next round, it follows that \itp{1} holds invariantly.
Provided $m>0$, process $p_k$ infinitely often receives and
releases a token in any execution, so lines 5-6 and lines
16-18 of Figure \ref{fig:newdelay} are executed repeatedly.
Line 18 calculates one fewer than the number of tokens that
do not rest at $p_k$ at the instant a token is released from
$p_k$ to $p_{k+1}$.  Provided \itp{1} holds, it will be $n$
subsequent rounds before this token circulates the ring
and returns to $p_k$.  Lines 1-6 compute the elapsed time
between the release of this token and its return, so that
$t_k=n$ when line 6 calculates the value of $ClockBase_k$,
and this drives convergence in the remainder of the execution.
Case \itp{2} eventually disappears, because $ClockBase_k>1$
is calculated in each execution of line 6.  Resting tokens for
\textsf{relay} processes therefore do not persist, assuming
$m<n$: no process receives a new token in every round, hence
any positive number of resting tokens reduces to zero, from
whence the count of resting tokens cannot rise.  \end{proof}

\section{Discussion}
\label{section:conclusion}

This paper provides fault tolerant constructions for a timing behavior
in which $m$ loci of control are separated.  The program mechanisms are 
simple:  tokens carry no data and processes use few variables.   The 
first construction can be uniform, distinguished (with one unique 
corrective process), or hybrid.  The second construction requires one
distinguished process.  

An interesting question is whether there can be a hybrid or 
uniform protocol when the ring size and separation constant are unknown.  
For the style of algorithm in Section \ref{section:unknown-sep}
we conjecture the answer is negative.  If one \textsf{delay} process
$p_i$ has an accurate estimate for maximum separation $d=c_i+1$ and 
does not delay any arriving token, another process $p_j$ may have 
either a larger, inaccurate estimate, or may perceive that tokens 
are unaligned with its counter and therefore delay some arriving 
tokens.  Such delay would lead to $p_i$ detecting an apparently larger
ring size, since the measured traversal time around the ring would include 
$p_j$'s delays.  Hence $p_i$ would raise its estimate for the 
separation value.  Note that the problem may admit other types of algorithms:
for example, if tokens are allowed to carry data, this would enable 
processes to communicate.  Whether such increased communication power 
is useful is an open question.  Another direction would be to use 
randomized timing, so that different \textsf{delay} processes do not 
interfere.

The program of Section \ref{section:known-sep} conforms to the standard
Petri net model of behavior control if we replace counters by auxiliary 
token rings, as shown in Figure \ref{figure:petri}.  This restriction 
enables tokens to model a physical system.  However, programs that use
tokens to carry data and thus communicate with explicit data rather than
mere timing of tokens would need more functionality from a 
physical embodiment than Section \ref{section:known-sep}'s programs
use in their timing-only mechanism.  We have preferred for the present
to investigate algorithms that use only the timing of tokens to 
overcome an unpredictable initial state.  

An obvious direction for future research is to move beyond rings
to other topologies.  We think it likely that some of the 
desiderata \desd{1}--\desd{5} will be relaxed for other 
topologies.  Figure \ref{figure:virtring} suggests how a virtual
ring, induced by a walk that includes all nodes, might be mapped 
upon a network.  Our protocols could be adapted to run on the 
virtual ring, and this might provide separated token circulation.  
Note that distance between tokens in the virtual ring could map 
to smaller distance in the underlying topology, because a node may 
appear more than once in the virtual ring.   The existence of 
a walk for which $m$ tokens can be separated by distance $d$ 
in the underlying topology is an open question.   Instead of 
mapping a complete walk of the network nodes, another strategy 
could be to map distinct rings upon a network so that they cover 
all nodes, and then hope to coordinate the timing of token 
circulation in these rings where they intersect.  

\begin{figure}[ht]
\quad
\begin{minipage}{0.45\columnwidth}
{\footnotesize
A complete walk of the network shown induces a virtual ring;
nodes in the center row of the network occur more than once
in the walk, conflicting with \desd{4}.  Shown are three tokens 
separated by distance at least two;  as all tokens synchronously
follow the walk, the separation by $d=2$ persists in the 
underlying network.  
}
\end{minipage}
\quad 
\framebox{
\begin{minipage}{0.45\columnwidth}
\quad\epsfig{file=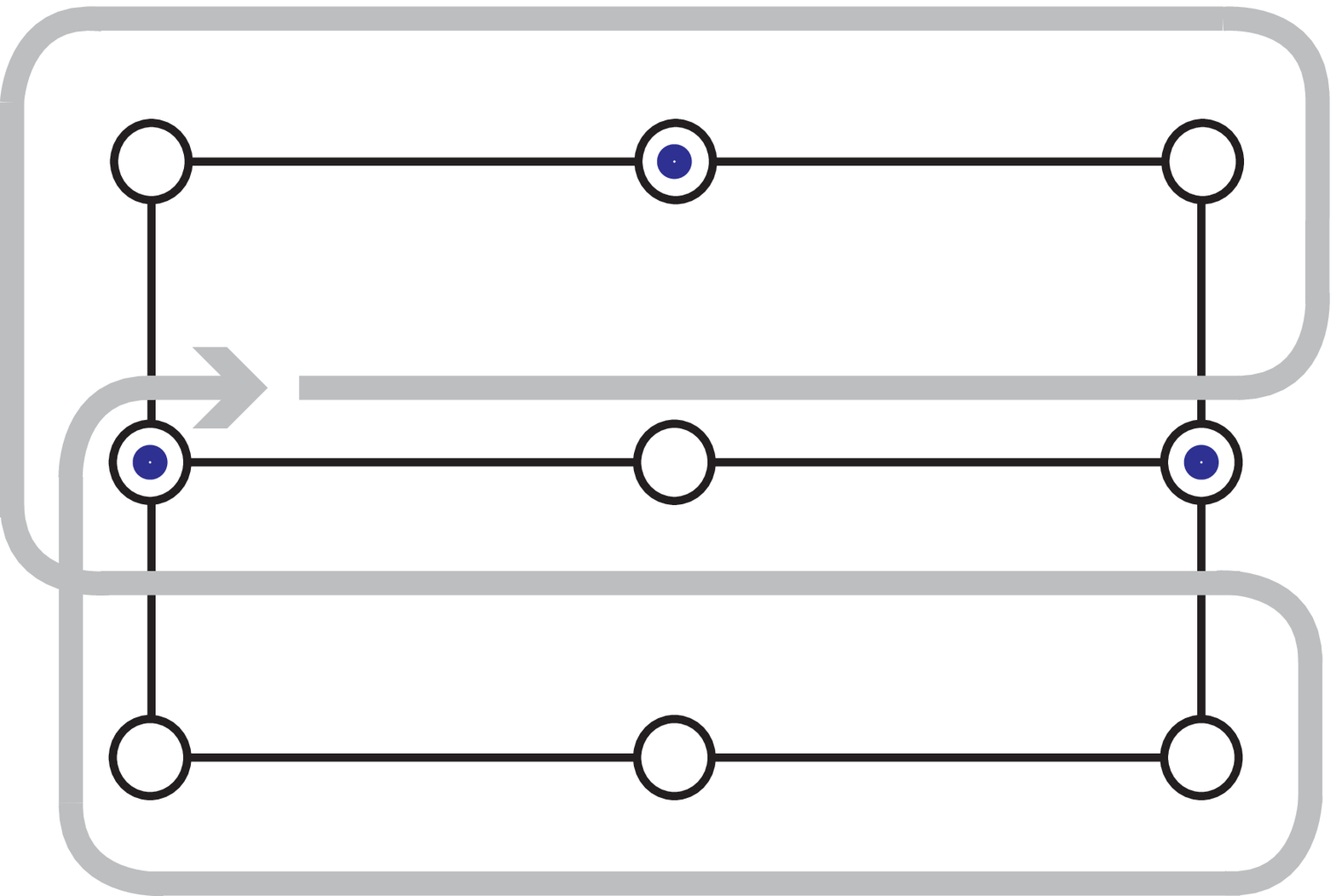,height=4cm,width=5.8cm}
\end{minipage}
}
\caption{Virtual Ring}
\label{figure:virtring}
\end{figure}

\end{document}